\documentclass[11pt,reqno]{amsart}
\usepackage{indentfirst}
\usepackage{eurosym}
\usepackage{amsmath,amssymb,epsfig}
\usepackage{amsfonts, lscape, harvard}
\usepackage{longtable,colortbl,array}
\usepackage{pdflscape}
\usepackage{subfig}
\usepackage{caption}
\usepackage[
  bookmarks=true,
  bookmarksopen=true,
  breaklinks=true,
  colorlinks=true,
  linkcolor=magenta(dye),
  citecolor=magenta(dye),
]{hyperref}
\usepackage[parfill]{parskip}
\usepackage{floatrow}
\usepackage{booktabs,rotating}
\usepackage{graphicx}
\usepackage{enumitem}

\definecolor{lust}{rgb}{0.9, 0.13, 0.13}
\definecolor{magenta(dye)}{rgb}{0.79, 0.08, 0.48}
\theoremstyle{plain}
\newtheorem{lemma}{Lemma}

\newtheorem{algorithm}{Algorithm}

\newtheorem{assumption}{Assumption}

\newtheorem{claim}{Claim}

\newtheorem{corollary}{Corollary}

\newtheorem{definition}{Definition}
\newtheorem{example}{Example}

\newtheorem{theorem}{Theorem}

\numberwithin{equation}{section}
\theoremstyle{definition}

\DeclareMathOperator*{\argmax}{arg\,max}

\let\P\relax
\newcommand{\P}{\mathbf{P}}
\newcommand{\Q}{\mathbf{Q}}
\newcommand{\R}{\mathbb{R}}

\newcommand{\X}{\mathcal{X}}
\newcommand{\Y}{\mathcal{Y}}

\newcommand{\I}{\mathcal{I}}
\newcommand{\J}{\mathcal{J}}

\input{tcilatex}

\begin{document}
\title[Stable Matching with Money Burning]{Aggregate Stable Matching with Money Burning}
\author[Alfred Galichon]{Alfred Galichon${}^\S{}^\dagger$}
\author[Yu-Wei Hsieh]{Yu-Wei Hsieh$^\clubsuit$}
\author[Antoine Jacquet]{Antoine Jacquet$^\dagger$}
\date{\today.\\
\indent \quad $^\S$ New York University 
and Sciences Po.
Email: ag133@nyu.edu\\
\indent \quad $^\clubsuit$ Amazon.
Email: yuweihsieh01@gmail.com\\
\indent \quad $^\dagger$ Sciences Po, Department of Economics.
Email: antoine.jacquet@sciencespo.fr \\
[5pt] \textbf{Acknowledgement:}
Galichon acknowledges support from the NSF grant
DMS-1716489, and he and Jacquet acknowledge support from the European Research Council, Grant CoG-866274. This paper has benefited from conversations with Federico Echenique,
Christopher Flinn, Jeremy Fox, Douglas Gale, Bryan Graham, Guy
Laroque, Thierry Magnac, Charles Manski, Konrad Menzel, Larry Samuelson,
Simon Weber, Glen Weyl, and comments from seminar participants at the
Toulouse School of Economics, the Fields Institute for Mathematical
Sciences, Carnegie Mellon University, Tepper School of Business, CalTech,
UNC, Columbia University, Yale University, Celebrating Chris Flinn's 65th
Birthday Conference, and the 2015 California Econometrics Conference.
Hsieh's contribution to the paper reflects the work completed prior to his joining Amazon.
This paper is dedicated to the memory of YingHua He, a dear friend and colleague whose scholarship has left a lasting mark on the field.
We are honored to contribute to this special issue in his memory.}

\begin{abstract}
We propose an aggregate notion of non-transferable utility (NTU) stability for decentralized matching markets with fixed prices, where market clearing is achieved through one-sided money burning, which can be interpreted as waiting.
Agents are grouped into observable types and are indifferent among individuals within type; equilibrium is defined at the type level and delivers equal indirect utility within each type.
We introduce money burning into two types of NTU models:
In a deterministic model, we relate our notion to classical Gale--Shapley stability and show how money burning decentralizes stable outcomes under aggregation.
We then introduce separable random utility, obtaining an NTU counterpart to Choo and Siow (2006).
We prove the existence and uniqueness of equilibrium and provide a stationary queueing interpretation.
Finally, we develop a generalized deferred acceptance algorithm based on alternating constrained discrete-choice problems and prove its convergence to the unique equilibrium.
\end{abstract}

\vspace*{-1.5cm}
\maketitle

{\footnotesize \textbf{Keywords}: two-sided matching, non-transferable
utility matching, money burning, rationing by waiting, non-price rationing,
aggregate matching, matching function, disequilibrium, discrete choice,
optimal transport.}

{\footnotesize \textbf{JEL Classification}: C78, D58\vskip50pt }


\section{Introduction}
\setcounter{page}{1}
\setcounter{equation}{0}

The literature on matching markets typically distinguishes between models with transferable utility (TU), in which a numéraire good---often money---clears the market,
and models with non-transferable utility (NTU), in which no numéraire exchange is possible.
Traditionally, models with transfers have been applied to \emph{decentralized} markets such as labor, housing, or marriage markets,
whereas models without transfers have typically been used to study \emph{centralized} markets such as school assignments, organ transplants, or medical residents, where a market designer clears the market using an algorithm without prices.
In the real world, however, decentralized matching markets without transfers are common due to sticky prices, capacity constraints, or regulatory requirements. In healthcare, for instance, excess demand is not cleared through transferable surge pricing: bed capacity is fixed in the short term, and copayments are predetermined by the insurance policy. Similarly, in traditional taxi markets, unit fares are set by regulators, and no centralized assignment algorithm exists.

In a decentralized market, an equilibrium typically requires a market-clearing device: a variable that adjusts the utilities attached to each option until every agent gets their preferred option given these adjustments.
In matching markets with TU, this role may be played by prices or intra-match bargaining.
In traditional NTU settings, however, partners' utilities cannot be adjusted through transfers, so the market cannot be cleared.
How then are stable outcomes selected and enforced when multiple agents want the same partner and ties cannot be cleanly broken?
In practice, such conflicts can be resolved through negative externalities tied to overdemanded options---that is, wasteful competition in the form of waiting lines, costly applications such as college admission essays, overinvestment in quality signals, or other forms of congestion, which we refer to under the umbrella term \emph{money burning}.
Waiting lines, for example, are then an institutionalized tie-breaking device: instead of fighting or engaging in cutthroat competition, agents compete by burning time, which lowers the effective payoff from popular matches until demand is consistent with capacity.

We thus propose a decentralized NTU matching model in which a money-burning mechanism---interpreted as time---replaces the price as the bidding device.
Our approach is \emph{competitive} in the sense that waiting lines form in front of over-demanded agents and reduce the payoff from those matches in proportion to the time waited,
yet it remains \emph{non-transferable utility}: time cannot be transferred across sides and is pure dissipation.
Our equilibrium notion is \emph{aggregate}:
Agents are partitioned into observable types and are indifferent between individuals of the same type; the equilibrium specifies matching patterns and waiting times (equivalently, payoffs) at the type level.
This delivers an equal-treatment property within types: two identical individuals obtain the same equilibrium payoff, with within-type differences absorbed by money burning.

Consider the following motivating example comparing our equilibrium notion with the classical TU and NTU stable matching.
Suppose that there are two passengers and one taxi. Passenger 1 and passenger 2 value the ride service by 2 and 1, respectively.
The taxi is indifferent between the two passengers, and we normalize its utility to zero.
The \emph{aggregate} outcome is that one passenger gets the ride since there is only one taxi available.
However, different matching models have different implications for welfare and \emph{individual} assignments.
Under the classical NTU model, assigning either passenger 1 or 2 to the taxi yields a stable matching, and NTU stability provides no further guidance.
Under our framework, passenger 1 waits until passenger 2 drops out.
The allocatively efficient assignment is decentralized by wasteful competition on the demand side: the ride goes to passenger 1, who values it most, but one unit of utility is dissipated through waiting, so the social surplus is $1$.
By contrast, under a free-market TU model, the efficient allocation is decentralized by a price of $1$ transferred to the taxi, and the social surplus is $2$.%
\footnote{Surge pricing in ride-hailing can be viewed as a move from NTU toward a TU-like mechanism that reduces waiting.}

We develop this notion in three steps.
First, we introduce \emph{aggregate stable matching with money burning} under deterministic utilities.
We then relate it to the classical individual-level definition of NTU stability \citeasnoun{GaleShapley1962}.
When there is one individual per type, any classical stable matching is also an aggregate stable matching supported by no money burning.
With multiple indistinguishable individuals per type, however, classical stability alone does not specify how overdemanded matches are decentralized while preserving equal treatment within type. Money burning provides exactly this missing equilibrium device.
Our contribution is therefore not to replace classical stability, but to provide its decentralized aggregate implementation under equal treatment within type.

Second, we extend the model to separable random utility à la \citeasnoun{ChooSiow2006}.
Waiting times enter discrete-choice demand exactly like prices, and equilibrium requires market clearing together with one-sided money burning in every segment.
Under a mild continuity assumption on taste shocks, we prove that there exists a \emph{unique} aggregate stable matching with money burning.
We also provide a stationary dynamic interpretation in which the static equilibrium coincides with the steady state of a market-clearing system with queues.

Third, we propose a generalized deferred acceptance algorithm for the random-utility model.
The key step is to view deferred acceptance as alternating between two constrained discrete-choice problems, with waiting times acting as shadow prices that enforce capacity constraints.
We prove convergence of the procedure and show that its limit coincides with the unique aggregate stable matching with money burning.

\subsection*{Related literature.}

This paper is related to four streams of the economic literature:
(i) non-price rationing,
(ii) decentralized matching without transfers,
(iii) deferred acceptance algorithms, and
(iv) matching with unobservable heterogeneity.
First, \emph{non-price rationing} arises in many diverse situations such as sticky prices in the macroeconomic theory of disequilibrium (e.g., \citeasnoun{Benassy1976}, \citeasnoun{GourierouxLaroque1985}, \citeasnoun{Dreze1987});
in credit rationing (e.g., \citeasnoun{Sealy1979});
in housing market with rent control (e.g., \citeasnoun{GlaeserLuttmer2003});
in mechanism design with money burning (e.g., \citeasnoun{HartlineRoughgarden2008}, \citeasnoun{Braverman_etal_2016});
and in health economics (e.g., \citeasnoun{LindsayFeigenbaum1984}, \citeasnoun{Iversen1993}, \citeasnoun{MartinSmith1999}, \citeasnoun{IversenSiciliani2011}).
The mathematical theory of queuing is surveyed in \citeasnoun{HassinHaviv2003}.
\citeasnoun{Condorelli2012} studies other forms of non-price rationing, such as priority lists and lotteries,
from the perspective of mechanism design.
In econometrics, simultaneous demand--supply systems subject to the quantity rationing constraints have been studied for example by \citeasnoun{FairJaffee1972},
\citeasnoun{GourierouxLaffontMonfort1980}, and \citeasnoun{Maddala1986}.
Beyond economics, there is a controversy about the social desirability of waiting lines as a rationing mechanism;
a vocal advocate in favor of them is Michael Sandel \citeyear{Sandel2013}.

Second, there is a large literature on ``market design problems'' focused on centralized matching models without transfers, which we will not review here; we shall focus instead on the narrower literature on \emph{decentralized matching without transfers}.
Our basic observation is that it is extremely difficult to define the aggregate stable matching when agents are clustered into types of indistinguishable individuals.
Indeed, in the absence of transfers, it can be challenging to break ties between identical individuals,%
\footnote{A literature on fractional stable matchings was initiated with the interesting paper of \citeasnoun{RothRothblumVate1993}; however, this model was not designed to handle aggregation problems.}
and it may therefore be difficult to enforce the desirable requirement that two agents with similar
characteristics will obtain the same payoff at equilibrium.
Models in the literature have resolved this difficulty mostly by pursuing two approaches.
The first approach involves stochastic rationing (see \citeasnoun{Gale1996} and
references therein) or the introduction of search frictions (e.g., \citeasnoun{BurdettColes1997}, \citeasnoun{Smith2006} and the references therein).
Search frictions provide a way to stochastically ration demand and supply and a rationale to explain variations in the equilibrium payoffs of similar individuals.
The second approach involves the introduction of heterogeneity, which can either be observed \cite{AzevedoLeshno2016} or unobserved, and can be captured in a random utility model (see \citeasnoun{Dagsvik2000} and \citeasnoun{Menzel2015}, who use logit heterogeneities).
\citeasnoun{CheKoh2016} have investigated the case of decentralized college admission with uncertain student preferences.
In particular, writing college-specific essays can be viewed as a money burning mechanism.
\citeasnoun{EcheniqueYariv2013} and \citeasnoun{NiederleYariv2009} provide other approaches to study decentralized matching markets.
\citeasnoun{EcheniqueLeeShumYenmex} offer a characterization of rationalizability of matchings without transfers in the spirit of revealed preference.

Third, the algorithm studied in this paper belongs to the broad class of deferred acceptance procedures initiated by \citeasnoun{GaleShapley1962}, extended to markets with flexible transfers by \citeasnoun{KelsoCrawford1982}, and generalized to matching with contracts by \citeasnoun{HatfieldMilgrom2005}.
Those papers study finite-agent markets, typically many-to-one and with salaries or contracts, whereas we study a one-to-one, fixed-price NTU environment with a continuum of agents aggregated into observable types.
In our setting, overdemand is resolved through endogenous money burning rather than through salary or contract terms. \citeasnoun{Galichon2026}, which also reviews an earlier version of this paper, reinterprets \citeasnoun{HatfieldMilgrom2005} through the lens of the symmetric constrained-choice framework developed here. In addition, \citeasnoun{GalichonHsiehSylvestre2024} extends our deferred acceptance construction to the demand-correspondence case in order to handle ties, introducing for this purpose a novel theory of monotone comparative statics.

Fourth, we consider models with stochastic utility components.
Therefore, our paper can be seen as the separable NTU counterpart of the separable TU model
with random utility proposed by \citeasnoun{GalichonSalanie2022}, who extend
the approach of \citeasnoun{ChooSiow2006} beyond the logit case.
\citeasnoun{GalichonKominersWeber2019} show that by choosing a suitable specification, our model arises
naturally as the limiting case of imperfectly transferable utility models with random utility.%
\footnote{However, \citeasnoun{GalichonKominersWeber2019} do not study how the NTU matching can be decentralized and the micro theory of money burning.}
As described by \citeasnoun{AzevedoLeshno2016}, our notion of equilibrium can be interpreted as
the solution of a tâtonnement process in a demand and supply framework;
however, in contrast to their framework, ours accommodates a finite number of agents and does not require consideration of a continuous limit.

\medskip

The rest of the paper is organized as follows.
Section \ref{sec:matching_deterministic_utility} introduces aggregate stability with money burning under deterministic utilities and relates it to classical NTU stability.
Section \ref{sec:matching_random_utility} extends the analysis to random utility, establishes existence and uniqueness, and provides a stationary queuing interpretation in a dynamic framework.
Section \ref{sec:deferred_acceptance} presents a generalized deferred acceptance algorithm based on constrained discrete choice and proves its convergence.
Section \ref{sec:conclusion} concludes with a brief discussion of the econometric implications of our equilibrium notion.
Appendices collect technical results and proofs.

\section{Aggregate Stable Matching: the Case of Deterministic Utility}
\label{sec:matching_deterministic_utility}

\subsection{Motivation and Definition}

We consider the problem of matching different types of taxis with different types of passengers.
There are $n_x$ passengers of type $x \in \X$, where $x$ includes the pick-up location, the size of the party, the type of vehicle requested, etc.
There are $m_y$ taxis of type $y \in \Y$ available, where $y$ includes the service offered (e.g., Pool, SUV, or Limo), amenities (e.g., video screen or snack box), driver rating, etc.
For each type, the number $n_x$ or $m_y$ is a strictly positive integer.
Agents are assumed to have preferences over types;
they are indifferent between two agents of the same
type.
A type-$x$ passenger enjoys utility $\alpha_{xy}$ from traveling in a taxi of type $y$, and a type-$y$ taxi enjoys $\gamma_{xy}$ from serving a passenger of type $x$.
The outside option is labeled by $0$, and the corresponding reservation utility of both passengers and taxis is normalized to zero without loss of generality.

We consider a non-transferable utility setup, namely, the price is predetermined---a common practice in the taxi industry.
When the market-clearing price is absent, the demand may not be equal to the supply, leading to quantity rationing or waiting lines that serve as the market clearing device.
We assume that utility is \emph{quasi-linear} in the time waited:
The utility obtained by a passenger $x$ riding in a taxi $y$ after waiting an amount of time $\tau_{xy}^\alpha \geq 0$ is $\alpha_{xy} - \tau_{xy}^\alpha$,
while the utility for a taxi $y$ transporting a passenger $x$ after waiting $\tau_{xy}^\gamma \geq 0$ is $\gamma_{xy} - \tau_{xy}^\gamma$.%
\footnote{Normalizing the marginal disutility of waiting time to one is without loss of generality, since utilities can be expressed in equivalent time units.
Instead, one could also model time as a discount factor that decreases utility.
However, this paradigm will lead to a non-quasilinear model.
It is possible to extend our analysis to such a case by utilizing abstract convex analysis as in \citeasnoun{Bonnet2020}.
On the other hand, time waited is type-specific in our model, which is akin to the type-specific price in \citeasnoun{ChooSiow2006}.
In the case of taxis, it may take longer for passengers in certain locations to match with taxis.
\citeasnoun{GalichonHsieh_hedonic} study a hedonic model with a common waiting time.}
Either passengers or taxis have to wait, depending on which side of the market is in shortage.
In a frictionless market,%
\footnote{In the real world, both taxis and passengers may incur non-zero waiting time before the driver reach the pick-up location.
Our model abstracts from this friction and only considers the ``net'' waiting time.}
there cannot exist simultaneously a nonempty waiting line of both passengers and taxis in the market segment $xy$:
\begin{equation*}
\min \big\{ \tau_{xy}^\alpha, \tau_{xy}^\gamma \big\} = 0.
\end{equation*}

An aggregate matching is a matrix $\mu = (\mu_{xy})$ where $\mu_{xy}$ denotes the number of passengers of type $x$ riding in taxis of type $y$.
We consider the competitive equilibrium in which passengers choose the type of taxis that maximize their surplus, and taxis choose the type of passengers that maximize their surplus.%
\footnote{%
In our model of a decentralized NTU matching market, taxis also play an active role in selecting passengers.
Traditionally, when passengers book a ride, the dispatch center broadcasts through its network to reach nearby drivers, and the one who response first wins the ride.
Today, this practice is largely conducted through mobile apps.
For example, Uber drivers can express their preferences over destinations for two trips everyday.
The assignment algorithm will attempt to match them first with passengers who request similar destinations.
Our behavioral assumption attempts to capture the fact that the drivers still have certain freedom to select the type of passengers \emph{before} boarding, and it does not contradict the common practice that the taxis cannot refuse to serve according to the destination \emph{after} the consumers are on board.
}
%
Let $u_x$ and $v_y$ be the indirect utilities of type-$x$ passengers and type-$y$ taxis, respectively.
We have
\begin{equation*}
u_x = \max_{y \in \Y}
\big\{ \alpha_{xy} - \tau_{xy}^\alpha, 0 \big\}
\quad\text{and}\quad
v_y = \max_{x \in \X}
\big\{ \gamma_{xy} - \tau_{xy}^\gamma, 0 \big\}.
\end{equation*}
Therefore, $u_x \geq \alpha_{xy} - \tau_{xy}^\alpha$ with equality if $x$ chooses $y$, i.e., if $\mu_{xy} > 0$.
Similarly, $v_y \geq \gamma_{xy} - \tau_{xy}^\gamma$ with equality if $\mu_{xy} > 0$.
As a result,
\begin{equation*}
\max \big\{ u_x - \alpha_{xy}, v_y - \gamma_{xy} \big\}
\geq \max \big\{ -\tau_{xy}^\alpha, -\tau_{xy}^\gamma \big\}
= -\min \big\{ \tau_{xy}^\alpha, \tau_{xy}^\gamma \big\} = 0,
\end{equation*}
and if $\mu_{xy} > 0$ this also holds with equality.
This brings us to the following definition of an aggregate stable matching:

\begin{definition}
\label{def:stable_matching_deterministic}
In the deterministic utility case, $(\mu,u,v)$ is an \emph{aggregate stable matching with money burning} if it meets the following six conditions:
\begin{enumerate}[label=(\roman*), topsep= 0pt, itemsep= 0pt]

\item $\mu_{xy}$ is an integer for all $xy \in \X \times \Y$,

\item $\sum_{y \in \Y} \mu_{xy} \leq n_x$ for all $x \in \X$,

\item $\sum_{x \in \X} \mu_{xy} \leq m_y$ for all $y \in \Y$,

\item for all $x \in \X$ and $y \in \Y$, $\max \big\{ u_x - \alpha_{xy}, v_y - \gamma_{xy} \big\} \geq 0$ with equality if $\mu_{xy}>0$,

\item for all $x \in \X$, $u_x \geq 0$ with equality if
$\mu_{x0} := n_x - \sum_{y \in \Y} \mu_{xy} > 0$,

\item for all $y \in \Y$, $v_y \geq 0$ with equality if
$\mu_{0y} := m_y - \sum_{x \in \X} \mu_{xy} > 0$.
\end{enumerate}

\end{definition}

Conditions (i)--(iii) are standard feasibility constraints on the matching.
Conditions (v)--(vi) enforce equal treatment for types containing unmatched agents.
Finally, condition (iv) is the stability condition which enforces one-sided waiting. 
Given an aggregate stable matching with money burning $(\mu,u,v)$, compatible waiting times can be recovered as $\tau_{xy}^\alpha = \max \{\alpha_{xy} - u_x, 0\}$ and $\tau_{xy}^\gamma = \max\{\gamma_{xy} - v_y, 0\}$.

Our equilibrium notion is distinct from the classical stable matching, e.g., \citeasnoun{GaleShapley1962}, in two fundamental ways.
First, we have introduced waiting lines as a competitive money-burning mechanism to decentralize the stable matching.
In this regard, our approach has a close connection with the transferable utility matching problems studied in
\citeasnoun{Becker1973} and \citeasnoun{ShapleyShubik1972}.
Indeed, by replacing the $\max$ function by the summation function in point (iv) of Definition \ref{def:stable_matching_deterministic}, one obtains the definition of stable matching with transferable utility.
By contrast, the classical matching theory typically relies on a centralized algorithm to achieve a stable matching.
Second, in our setup, passengers only care about the \emph{type} of the service.
Two taxis with distinct license plates but of the same type are perfect substitutes.
As a consequence, our notion of stable matching is an \emph{aggregate} equilibrium.
By contrast, in the classical setup, agents are allowed to express their preference ranking at the individual level.


\subsection{Comparison with Classical Stable Matching}
\label{sec:compNTUStable}

In this section we establish the connection between Definition \ref{def:stable_matching_deterministic} and the classical definition of stable matching with non-transferable utility, which are not equivalent.
Since \citeasnoun{GaleShapley1962} is based on describing the matching problem at the individual level, we first
need to describe individual passengers and taxis.

Let $\I$ denote the set of passengers, $\J$ the set of taxis, $x_i \in \X$ the observable type of passenger $i \in \I$, and $y_j \in \Y$ the observable type of taxi $j \in \J$.
A match between passenger $i$ and taxi $j$ brings utility $\alpha_{ij}$ to the passenger and $\gamma_{ij}$ to the taxi.
The reservation utilities are still normalized to zero, i.e., $\alpha_{i0} = 0$ and $\gamma_{0j} = 0$.
Because agents of the same type share the same preferences and are perfect substitutes for potential partners, they are indistinguishable.
Therefore,
\begin{equation}  \label{eq:aggregated_preferences}
\alpha_{ij} = \alpha_{x_i y_j}
\quad \text{and} \quad
\gamma_{ij} = \gamma_{x_i y_j}.
\end{equation}
A matching at the individual level is a binary matrix $\pi = (\pi_{ij})$ such that $\pi_{ij} = 1$ if $i$ and $j$ are matched, and $\pi_{ij} = 0$ otherwise.
Under the matching $\pi$, passenger $i$ and taxi $j$ respectively enjoy the utilities $u_i^\pi$ and $v_j^\pi$ given by
\begin{equation}
\label{eq:u_i_v_j}
u_i^\pi = \sum_{j \in \J} \pi_{ij} \alpha_{ij}
\quad \text{and} \quad
v_j^\pi = \sum_{i \in \I} \pi_{ij} \gamma_{ij}.
\end{equation}

Below we summarize the classical definition of stable matching.

\begin{definition}
\label{def:classical_stable_matching}
A matching $\pi$ is a \emph{stable matching in the classical sense} if it meets the following six conditions:
\begin{enumerate}[label=(\roman*), topsep= 0pt, itemsep= 0pt]

\item $\pi_{ij} \in \{0,1\}$ for all $ij \in \I \times \J$,

\item $\sum_{j \in \J}\pi_{ij}\leq 1$ for all $i \in \I$,

\item $\sum_{i \in \I}\pi_{ij}\leq 1$ for all $j \in \J$,

\item for all $i \in \I$ and $j \in \J$,
$\max \big\{ u_i^\pi -\alpha_{ij}, v_j^\pi - \gamma_{ij} \big\} \geq 0$,

\item for all $i \in \I$, $u_i^\pi\geq 0$,

\item for all $j \in \J$, $v_j^\pi\geq 0$.

\end{enumerate}

\end{definition}

Our first result establishes the connection between stable matchings in the classical sense and aggregate stable matchings with money burning as introduced in Definition \ref{def:stable_matching_deterministic}.

\begin{theorem}
\label{thm:EquivWithClassicalNotion}
Assume that $\sum_{i \in \I} \mathbf 1(x_i=x) = n_x$ for all $x \in \X$, $\sum_{j \in \J} \mathbf 1(y_j=y) = m_y$ for all $y \in \Y$, and \eqref{eq:aggregated_preferences} holds.
Then:
\begin{enumerate}[label=(\roman*), topsep= 0pt, itemsep= 0pt]

\item If $\pi$ is a stable matching in the classical sense, then letting
\begin{equation}  \label{eq:mu_from_pi}
\mu_{xy} = \sum_{i \in \I, \, j \in \J} \pi_{ij} \mathbf 1(x_i=x) \mathbf 1(y_j=y),
\end{equation}
\begin{equation}  \label{eq:u_v_from_pi}
u_x = \min_{i:x_i=x} u_i^\pi,
\qquad
v_y = \min_{j:y_j=y} v_j^\pi,
\end{equation}
where $u_i^\pi$ and $v_j^\pi$ are defined in \eqref{eq:u_i_v_j},
the outcome $(\mu,u,v)$ is an aggregate stable matching with money burning.

\item Conversely, if $(\mu,u,v)$ is an aggregate
stable matching with money burning, then any matching $\pi$ satisfying conditions (i)--(iii) of Definition \ref{def:classical_stable_matching} and such that \eqref{eq:mu_from_pi} holds is a stable matching in the classical sense.

\end{enumerate}

\end{theorem}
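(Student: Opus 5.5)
Both parts reduce to checking the six conditions of one definition from the six conditions of the other, using \eqref{eq:aggregated_preferences} to move between individual and type utilities.

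\textbf{Part (i).} Conditions (i)--(iii) of Definition \ref{def:stable_matching_deterministic} follow from \eqref{eq:mu_from_pi} by summing the individual feasibility constraints within types:
\[
\sum_y \mu_{xy} = \sum_{i:x_i=x}\sum_j \pi_{ij} \le n_x ,
\]
and similarly for taxis. For (v), we have $u_x = \min_{i:x_i=x} u_i^\pi \ge 0$ by (v) of Definition \ref{def:classical_stable_matching}. If $\mu_{x0}>0$, some passenger $i$ of type $x$ is unmatched, so $u_i^\pi=0$ and hence $u_x=0$. Condition (vi) is symmetric.

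For the inequality in (iv), I fix $x,y$ and choose $i$ with $x_i=x$ attaining $u_x$ and $j$ with $y_j=y$ attaining $v_y$. Classical (iv) for the pair $(i,j)$, combined with \eqref{eq:aggregated_preferences}, gives $\max\{u_x-\alpha_{xy},v_y-\gamma_{xy}\}\ge 0$.

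The main step is the equality case. Suppose $\mu_{xy}>0$, so some matched pair $(i',j')$ has types $(x,y)$ and satisfies $u_{i'}^\pi=\alpha_{xy}$ and $v_{j'}^\pi=\gamma_{xy}$. Since $u_x$ is a minimum, $u_x\le \alpha_{xy}$, and likewise $v_y\le \gamma_{xy}$. Hence the max is $\le 0$, and combined with the inequality above it equals $0$. The only subtlety is that the minimizers in \eqref{eq:u_v_from_pi} need not be matched to each other. This is exactly why the minimum is the right aggregation: the inequality uses the minimizers, and the equality uses the fact that the minimum lies below any realized value.

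\textbf{Part (ii).} Let $\pi$ be feasible with aggregate $\mu$; classical (i)--(iii) hold by assumption. I first show $u_i^\pi\ge u_{x_i}$ for every passenger $i$, with the symmetric statement for taxis:
\begin{itemize}
\item If $i$ is unmatched, then $\mu_{x_i0}>0$, so $u_{x_i}=0=u_i^\pi$ by aggregate (v).
\item If $i$ is matched to $j$, then $\mu_{x_iy_j}>0$, so the equality in aggregate (iv) gives $u_{x_i}\le \alpha_{x_iy_j}=u_i^\pi$.
\end{itemize}
Classical (v)--(vi) then follow from $u_x,v_y\ge0$. For classical (iv), given any $(i,j)$,
\[
\max\{u_i^\pi-\alpha_{ij},\,v_j^\pi-\gamma_{ij}\}\ \ge\ \max\{u_{x_i}-\alpha_{x_iy_j},\,v_{y_j}-\gamma_{x_iy_j}\}\ \ge\ 0 .
\]

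I expect no real obstacle in part (ii). The equality case of part (i) is the one place requiring care.
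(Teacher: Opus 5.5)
Your proof is correct and follows essentially the same route as the paper's. In part (i), your explicit minimizers play the role of the paper's identity that the minimum over type-$(x,y)$ pairs of $\max\{u_i^\pi-\alpha_{ij},v_j^\pi-\gamma_{ij}\}$ equals $\max\{u_x-\alpha_{xy},v_y-\gamma_{xy}\}$. In part (ii), your direct monotonicity argument from $u_i^\pi\ge u_{x_i}$ and $v_j^\pi\ge v_{y_j}$ is just the contrapositive of the paper's blocking-pair contradiction.
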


The first part of Theorem \ref{thm:EquivWithClassicalNotion} suggests that one may have to burn an amount of money in order to decentralize a given stable matching in the classical sense.
Suppose that under the stable matching $\pi$, passenger $i$ of type $x$ is matched with taxi $j$ of type $y$.
Then one can interpret $\tau_i^\alpha = u_i^\pi - u_x$ and $\tau_j^\gamma = v_j^\pi - v_y$ as the waiting times in the associated aggregate stable matching $(\mu,u,v)$.
The waiting times are there to ensure that all agents of the same type receive as much utility as the worse-off agent of that type.
The second part of Theorem \ref{thm:EquivWithClassicalNotion} states that any individual-level matching, as long as its aggregate number of matches by type coincides with a given aggregate stable matching with money burning, is also a stable matching in the classical sense.
Lastly, Theorem \ref{thm:EquivWithClassicalNotion} implies the following corollary:


\begin{corollary}  \label{cor:NoTradeEquil}
When there is one individual of each type, any stable matching in the classical sense can be interpreted as an aggregate stable matching supported by no money burning.
\end{corollary}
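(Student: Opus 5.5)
The corollary follows from part (i) of Theorem \ref{thm:EquivWithClassicalNotion}, specialized to $n_x = m_y = 1$ for all types. In that case we identify each passenger $i$ with its type $x_i$ and each taxi $j$ with its type $y_j$, so the maps $i \mapsto x_i$ and $j \mapsto y_j$ are bijections. Condition \eqref{eq:aggregated_preferences} then holds trivially: we simply define $\alpha_{x_i y_j} := \alpha_{ij}$ and $\gamma_{x_i y_j} := \gamma_{ij}$. The counting hypotheses $\sum_i \mathbf 1(x_i=x) = 1 = n_x$ and $\sum_j \mathbf 1(y_j = y) = 1 = m_y$ are immediate, so Theorem \ref{thm:EquivWithClassicalNotion} applies.

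Now let $\pi$ be a stable matching in the classical sense. Part (i) of the theorem says that $(\mu,u,v)$ defined by \eqref{eq:mu_from_pi}--\eqref{eq:u_v_from_pi} is an aggregate stable matching with money burning. Since each type is a singleton, we have $\mu_{x_i y_j} = \pi_{ij}$. The minima in \eqref{eq:u_v_from_pi} are taken over a single element, so $u_{x_i} = u_i^\pi$ and $v_{y_j} = v_j^\pi$.

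It remains to show that no money is burned, that is, the supporting waiting times are zero on realized matches. Following the interpretation given after the theorem, the waiting time of passenger $i$ is $\tau_i^\alpha = u_i^\pi - u_{x_i} = 0$, and likewise $\tau_j^\gamma = v_j^\pi - v_{y_j} = 0$. More concretely, if $\pi_{ij}=1$, then \eqref{eq:u_i_v_j} gives $u_{x_i} = \alpha_{ij}$ and $v_{y_j} = \gamma_{ij}$. Hence the recovered waiting times $\tau^\alpha_{x_iy_j} = \max\{\alpha_{ij} - u_{x_i},0\}$ and $\tau^\gamma_{x_iy_j} = \max\{\gamma_{ij} - v_{y_j},0\}$ both vanish on matched pairs. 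Each agent therefore obtains exactly its raw match utility.

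There is no real obstacle here. The only point requiring care is to read ``no money burning'' as zero waiting on realized matches, equivalently as $u$ and $v$ coinciding with the individual payoffs. Off-path waiting times $\tau_{xy}$ for unmatched pairs are merely shadow values that enforce condition (iv); they involve no actual dissipation.
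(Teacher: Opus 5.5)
Your proof is correct and follows the paper's route: you specialize Theorem~\ref{thm:EquivWithClassicalNotion}(i) to singleton types, so the minima in \eqref{eq:u_v_from_pi} are trivial and the implied waiting times $u_i^\pi - u_{x_i}$ and $v_j^\pi - v_{y_j}$ vanish. Your reading of ``no money burning'' as zero waiting on realized matches is the right one, since off-path waiting times only sustain condition (iv) and involve no realized dissipation.
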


In the next section we also prove existence of the aggregate stable matching with money burning in the deterministic utility case by taking a vanishing-randomness limit of the random utility model; see Theorem \ref{thm:limit}.



\section{Aggregate Stable Matching: the Case of Random Utility}
\label{sec:matching_random_utility}

While the finite-agent deterministic model is useful for introducing the equilibrium notion and its connection with classical NTU stability, it does not deliver a smooth aggregate demand system.
To study existence and uniqueness, we now introduce idiosyncratic random utility and move to a continuum of agents.

\subsection{Definition.}
We continue to adopt the language of passengers and taxis as in section \ref{sec:matching_deterministic_utility}.
In the same spirit as \citeasnoun{GalichonSalanie2022}, there is now a continuum of agents on the market, with mass $n_x$ of type-$x$ passengers and mass $m_y$ of type-$y$ taxis.
In contrast to the deterministic utility case, passenger $i$ of type $x$ traveling in a taxi $y$ enjoys not only the systematic utility $\alpha_{xy}$, but also an additively separable random utility component $\varepsilon_{iy}$.
The vector $\varepsilon_i = (\varepsilon_{iy})_y$ of these random components follows a distribution $\P_x$ which may depend on $x$.
Similarly, taxi $j$ of type $y$ serving a passenger $x$ enjoys the systematic utility $\gamma_{xy}$, and an additively separable random utility component $\eta_{xj}$.
The vector $\eta_j = (\eta_{xj})_x$ follows a distribution $\Q_y$ which may depend on $y$.
As in the textbook discrete-choice model \cite{McFadden1976}, we assume that each decision maker observes their realization of the random utility component before making the choice.
The economist who studies the resulting demand system, however, only knows the distributions $(\P_x,\Q_y)$.
We make the following assumption on the random utility component:

\begin{assumption}  \label{ass:nonvanishing}
For all $x \in \X$ and $y \in \Y$, $\P_x$ and $\Q_y$ have a nowhere vanishing density.
\end{assumption}

As in section \ref{sec:matching_deterministic_utility}, the agents' systematic utility is also quasi-linear in the amount of time waited.
We denote by $\tau_{xy}^\alpha$ the waiting time for passengers $x$ wishing to match with a taxi $y$, and by $\tau_{xy}^\gamma$ the waiting time for taxis $y$ wishing to match with a passenger $x$.
The demand for taxis $y$ by passengers $x$ as a function of $\tau^\alpha$, denoted $\boldsymbol \mu_{xy}^\alpha (\tau^\alpha)$, is therefore
\begin{equation} \label{eq:demand_passengers}
\boldsymbol \mu_{xy}^\alpha (\tau^\alpha)
= n_x \, \P_x \Big(
y \in \argmax_{y' \in \Y \cup \{0\}} \big\{ \alpha_{xy'} - \tau_{xy'}^\alpha + \varepsilon_{iy'} \big\}
\Big),
\end{equation}
and the demand for passengers $x$ by taxis $y$ as a function of $\tau^\gamma$, denoted $\boldsymbol \mu_{xy}^\gamma (\tau^\gamma)$, is
\begin{equation} \label{eq:demand_taxis}
\boldsymbol \mu_{xy}^\gamma (\tau^\gamma)
= m_y \, \Q_y \Big(
x \in \argmax_{x' \in \X \cup \{0\}} \big\{ \gamma_{x'y} - \tau_{x'y}^\gamma + \eta_{x'j} \big\}
\Big),
\end{equation}
where by convention $\tau_{x0}^\alpha = \tau_{0y}^\gamma = 0$.
These demand functions react to waiting times exactly like standard discrete-choice demands react to prices.
When the waiting time $\tau_{xy}^\alpha$ increases, passengers $x$ find taxis $y$ less attractive, hence their demand $\boldsymbol \mu_{xy}^\alpha (\tau^\alpha)$ for taxis $y$ decreases while their demand $\boldsymbol \mu_{xy'}^\alpha (\tau^\alpha)$ for other taxis $y' \neq y$ or for the outside option $y'=0$ increases.%
\footnote{Under Assumption \ref{ass:nonvanishing}, all these changes are strict.}
Likewise, when $\tau_{xy}^\gamma$ increases, taxis $y$ find passengers $x$ less attractive, so their demand $\boldsymbol \mu_{xy}^\gamma (\tau^\gamma)$ for passengers $x$ decreases while their demand $\boldsymbol \mu_{x'y}^\gamma (\tau^\gamma)$ for other passengers $x' \neq x$ or for the outside option $x'=0$ increases.

Our solution concept is a frictionless, competitive equilibrium analysis à la \citeasnoun{ChooSiow2006}, in which agents choose their most preferred type of match, taking utilities and waiting times as given.
The optimal choices made by all agents collectively determine the equilibrium matching and the level of money-burning.
At equilibrium, (i) demand equals supply, and (ii) there cannot be a pair where both sides of the market burn money, i.e., a passenger of type $x$ waiting for a taxi of type $y$ while a taxi of type $y$ is simultaneously waiting for a passenger of type $x$.
Formally, we define the aggregate stable matching with money burning as follows:

\begin{definition}  \label{def:stable_matching_random}
In the case of random utility, $(\mu, \tau^\alpha, \tau^\gamma)$ is an aggregate stable matching with money burning if it  verifies simultaneously:
\begin{enumerate}[label=(\roman*), topsep= 0pt, itemsep= 0pt]

\item \textbf{Market Clearing:}
The number of matches of type $xy$ equals both the demand for taxis $y$ by passengers $x$ under $\tau^\alpha$, and the demand for passengers $x$ by taxis $y$ under $\tau^\gamma$; namely,
\begin{equation}  \label{eq:market_clearing}
\mu_{xy} = \boldsymbol \mu_{xy}^\alpha (\tau^\alpha) = \boldsymbol \mu_{xy}^\gamma (\tau^\gamma),
~ \forall x \in \X, y \in \Y.
\end{equation}

\item \textbf{One-Sided Money Burning:}
There is no market segment $xy$ where both passengers and taxis wait a positive amount of time; namely,
\begin{equation}  \label{eq:one_sided_money_burning}
\min \big\{ \tau_{xy}^\alpha, \tau_{xy}^\gamma \big\} = 0,
~ \forall x \in \X, y \in \Y.
\end{equation}

\end{enumerate}

\end{definition}


We remark that the equilibrium notion from Definition \ref{def:stable_matching_random} admits a closed-form expression in the logit case:

\begin{example}  \label{ex:matching_logit}
If the random utility terms $(\varepsilon_{iy})_y$ and $(\eta_{xj})_x$ follow i.i.d.\ Gumbel distributions, the choice probabilities defined in equations \eqref{eq:demand_passengers}--\eqref{eq:demand_taxis} are logit probabilities,
and therefore the market clearing condition \eqref{eq:market_clearing} corresponds to the following system of equations:
\begin{gather*}
\mu_{xy}
= \mu_{x0} \exp \big( \alpha_{xy} - \tau_{xy}^\alpha \big)
= \mu_{0y} \exp \big( \gamma_{xy} - \tau_{xy}^\gamma \big), \\
\mu_{x0} = \frac{n_x}{1 + \sum_y \exp \big( \alpha_{xy} - \tau_{xy}^\alpha \big)},
\qquad
\mu_{0y} = \frac{m_y}{1 + \sum_x \exp \big( \gamma_{xy} - \tau_{xy}^\gamma \big)},
\end{gather*}
hence $\tau_{xy}^\alpha = \alpha_{xy} - \log ( \mu_{xy} / \mu_{x0} )$
and $\tau_{xy}^\gamma = \gamma_{xy} - \log ( \mu_{xy} / \mu_{0y} )$.
Imposing condition \eqref{eq:one_sided_money_burning}, we obtain
\begin{equation*}
\min \big\{ \alpha_{xy} - \log (\mu_{xy}/\mu_{x0}), 
\gamma_{xy} - \log (\mu_{xy}/\mu_{0y}) \big\} = 0,
\end{equation*}
from which we get
\begin{equation*}
\mu_{xy} = \min \big\{ \mu_{x0} \exp (\alpha_{xy}),
\mu_{0y} \exp (\gamma_{xy}) \big\}.
\end{equation*}
Substituting this expression of $\mu_{xy}$ into the accounting equations $\mu_{x0} + \sum_{y \in \Y} \mu_{xy} = n_x$ and $\mu_{0y} + \sum_{x \in \X} \mu_{xy} = m_y$ yields the following system in $\mu_{x0}$ and $\mu_{0y}$:
\begin{equation}  \label{eq:logit_matching}
\begin{array}{l}
\mu_{x0} + \sum_{y \in \Y} \min \big\{ \mu_{x0} \exp (\alpha_{xy}), \mu_{0y} \exp (\gamma_{xy}) \big\} = n_x \\
\mu_{0y} + \sum_{x \in \X} \min \big\{ \mu_{x0} \exp (\alpha_{xy}), \mu_{0y} \exp (\gamma_{xy}) \big\} = m_y.
\end{array}
\end{equation}
\end{example}

\subsection{Existence and Uniqueness}

The existence of a unique solution to the system of equations \eqref{eq:logit_matching} can be established by applying a fixed-point theorem.
For general random taste shifters beyond the logit case, however, the choice probabilities do not admit a closed-form expression and we must rely on other mathematical methods.
First, we define $\tau_{xy} = \tau_{xy}^\alpha - \tau_{xy}^\gamma$.
Clearly, $\tau_{xy}^\alpha$ and $\tau_{xy}^\gamma$ can be treated as the positive and negative parts of $\tau_{xy}$:
\begin{equation}  \label{eq:defTaus}
\tau_{xy}^\alpha = \tau_{xy}^+ = \max \{\tau_{xy},0\},
\qquad
\tau_{xy}^\gamma = \tau_{xy}^- = -\min \{\tau_{xy},0\}.
\end{equation}

Notice that by definition of these positive and negative parts, condition \eqref{eq:one_sided_money_burning} is satisfied automatically.
With condition \eqref{eq:market_clearing}, we can thus characterize an aggregate stable matching with money burning as a solution to the system of nonlinear equations:
\begin{equation}  \label{eq:zero_excess_demand}
\mathbf e (\tau) = 0
\end{equation}
where $\mathbf e : \R^{|\X \times \Y|} \to \R^{|\X \times \Y|}$ is the excess demand function defined by
\begin{equation}  \label{eq:excess_demand}
\begin{aligned}
\mathbf e_{xy} (\tau)
&:= \boldsymbol \mu_{xy}^\gamma (\tau^-) - \boldsymbol \mu_{xy}^\alpha (\tau^+). 
\end{aligned}
\end{equation}
Using this formulation, we leverage results from \citeasnoun{Rheinboldt1974} on M-functions to prove both existence and uniqueness of the aggregate stable matching.%
\footnote{Uniqueness is driven by the fact that the distributions of the random utility components are continuous.
By contrast, in the case of deterministic
utilities as studied in section \ref{sec:matching_deterministic_utility}, there may
exist multiple equilibria.}
(See Appendix \ref{apx:Mfunctions} on M-functions, or \citeasnoun{GalichonJacquet2024} for an extended review.)

\begin{theorem}  \label{thm:ExistenceUniquenessEquil}
Under Assumption \ref{ass:nonvanishing}, there exists a unique aggregate stable matching with money burning in the random utility case.
\end{theorem}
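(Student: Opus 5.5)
We will show that the excess demand map $\mathbf e$ of \eqref{eq:excess_demand} is a continuous, surjective M-function in the sense of \citeasnoun{Rheinboldt1974}. Existence and uniqueness of a zero then follow, and by the reduction leading to \eqref{eq:zero_excess_demand} this gives existence and uniqueness of $(\mu,\tau^\alpha,\tau^\gamma)$. Uniqueness of $\tau$ gives uniqueness of $\tau^\alpha=\tau^+$ and $\tau^\gamma=\tau^-$, hence of $\mu$. We also need that any equilibrium can be written this way. If $\min\{\tau^\alpha_{xy},\tau^\gamma_{xy}\}=0$, then setting $\tau=\tau^\alpha-\tau^\gamma$ recovers $\tau^\alpha=\tau^+$ and $\tau^\gamma=\tau^-$. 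So equilibria correspond bijectively to zeros of $\mathbf e$.

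\textbf{Step 1: continuity and monotonicity.} Under Assumption \ref{ass:nonvanishing}, the choice probabilities $\boldsymbol\mu^\alpha$ and $\boldsymbol\mu^\gamma$ are continuous in the waiting times, since ties have probability zero. We then check the sign structure.
\begin{itemize}
\item Raising $\tau_{xy}$ (weakly) raises $\tau^+_{xy}$ and lowers $\tau^-_{xy}$. Hence $\boldsymbol\mu^\alpha_{xy}$ weakly decreases and $\boldsymbol\mu^\gamma_{xy}$ weakly increases, so $\mathbf e_{xy}$ is increasing in its own argument. In fact it is strictly increasing: on each side of $0$ exactly one of $\tau^\pm_{xy}$ moves, and the corresponding demand is strictly monotone because the density is positive.
\item Raising $\tau_{x'y'}$ with $x'y'\neq xy$ affects $\mathbf e_{xy}$ only if $x'=x$ or $y'=y$. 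When it does, it weakly raises $\boldsymbol\mu^\alpha_{xy}$ (substitutes) and weakly lowers $\boldsymbol\mu^\gamma_{xy}$. So $\mathbf e_{xy}$ is weakly decreasing in off-diagonal arguments, i.e.\ $\mathbf e$ is off-diagonally antitone.
\end{itemize}

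\textbf{Step 2: M-function property.} Following Rheinboldt, we verify that $\mathbf e$ is inverse isotone: $\mathbf e(\tau)\le \mathbf e(\tau')$ implies $\tau\le\tau'$. Suppose not, and let $Z=\{xy:\tau_{xy}>\tau'_{xy}\}\neq\emptyset$. The plan is to sum $\mathbf e_{xy}(\tau)-\mathbf e_{xy}(\tau')$ over $Z$ and obtain a strict positive sign. A cleaner route is to use gross-substitutes aggregation on each side. Passengers of type $x$ face higher $\tau^+$ on $Z_x=\{y:xy\in Z\}$ and lower or equal $\tau^+$ elsewhere. Hence the mass of type $x$ choosing some $y\in Z_x$ weakly decreases, strictly if $\tau^+$ actually changes on $Z_x$. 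The symmetric statement holds for taxis with $\tau^-$. Summing gives $\sum_Z \mathbf e(\tau)>\sum_Z\mathbf e(\tau')$. The strictness needs care because on part of $Z$ only one of $\tau^+$ or $\tau^-$ moves, but on every $xy\in Z$ at least one of them moves. This contradicts $\mathbf e(\tau)\le\mathbf e(\tau')$. Combined with off-diagonal antitonicity, this makes $\mathbf e$ an M-function, and injectivity gives \emph{uniqueness}.

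\textbf{Step 3: existence via a coercivity argument (main obstacle).} An M-function need not be surjective, so we must show $0$ is in the range. The first approach is Rheinboldt's result that a continuous M-function is surjective onto an order interval if we can exhibit sub- and super-solutions $\underline\tau\le\bar\tau$ with $\mathbf e(\underline\tau)\le 0\le \mathbf e(\bar\tau)$.
\begin{itemize}
\item For the super-solution, take $\bar\tau_{xy}=T$ large for all $xy$. Then $\boldsymbol\mu^\gamma_{xy}(0)$ equals the taxis' unconstrained demand, which is bounded below by a positive constant. Meanwhile $\boldsymbol\mu^\alpha_{xy}(T)\to 0$ as $T\to\infty$, because waiting time $T$ makes $y$ dominated by the outside option with probability tending to one (tightness of $\P_x$). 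So $\mathbf e(\bar\tau)\ge 0$ for $T$ large.
\item For the sub-solution, take $\underline\tau=-T$ symmetrically, which gives $\mathbf e(\underline\tau)\le0$.
\end{itemize}
A Brouwer or isotone fixed-point argument (e.g.\ Tarski applied to the Jacobi-type map solving $\mathbf e_{xy}(\cdot)=0$ coordinatewise, which is well-defined by strict diagonal monotonicity and continuity) then produces a zero in $[\underline\tau,\bar\tau]$. The delicate points are making the coordinatewise solution well-defined, which uses the limits of $\mathbf e_{xy}$ as $\tau_{xy}\to\pm\infty$ having opposite signs, and verifying that the map is isotone, which uses off-diagonal antitonicity.
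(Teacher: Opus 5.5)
Your proof is correct and shares the paper's architecture: equilibria as zeros of $\mathbf e$, off-diagonal antitonicity, sub- and super-solutions $\pm c$, and coordinatewise solving. The uniqueness step, however, rests on a different key lemma.

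\textbf{Uniqueness.} The paper only checks strong nonreversal, i.e.\ it treats \emph{comparable} pairs $\tau\le\tau'$. It uses the total sum $\sum_{xy}\mathbf e_{xy}(\tau)=\sum_y m_y-\sum_x n_x+\sum_x\boldsymbol\mu^\alpha_{x0}(\tau^+)-\sum_y\boldsymbol\mu^\gamma_{0y}(\tau^-)$. This collapses to outside-option shares and is strictly increasing in every coordinate. The paper then imports Rheinboldt's result (Theorem~\ref{thm:Mfunction_inverse_isotone}) to obtain inverse isotonicity.

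You instead prove inverse isotonicity directly for arbitrary pairs by summing over $Z=\{xy:\tau_{xy}>\tau'_{xy}\}$. This needs a finer set-wise substitutes property of random-utility demand: the mass of type $x$ choosing within $Z_x$ falls when those options worsen and all others weakly improve, and symmetrically for taxis. It also needs strictness. You get strictness correctly: every $xy\in Z$ moves $\tau^+_{xy}$ or $\tau^-_{xy}$ strictly, and under Assumption~\ref{ass:nonvanishing} the corresponding type's share of options outside $Z_x$ (resp.\ $Z^y$), which includes the outside option, then changes strictly. Your route is self-contained and in effect reproves Rheinboldt's theorem from the discrete-choice structure. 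The paper's route is shorter but relies on the external result.

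\textbf{Existence.} The paper's sequence is exactly the iteration of your coordinatewise map started at $\underline\tau$. Its monotone-limit argument and your Tarski (or Brouwer) argument therefore rest on the same map and the same sub/super-solution bounds. The paper's version is constructive and yields an algorithm. It also only needs the intermediate value theorem on $[\tau^s_{xy},\bar\tau_{xy}]$. Yours needs the coordinatewise zero at every point of the box, which you correctly obtain from the opposite-sign limits.

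One small misstatement: $\mathbf e$ is not surjective, since each $\mathbf e_{xy}$ takes values in $(-n_x,m_y)$. You only use that $0$ is attained, which your Step~3 does establish.
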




\subsection{Limit when the Stochastic Utility Component is Small}
\label{par:limit}

In this paragraph, we show that the aggregate stable matching with the logit stochastic component studied in section \ref{sec:matching_random_utility} converges (when the amount of randomness tends to zero) to an aggregate stable matching with deterministic utility as studied in section \ref{sec:matching_deterministic_utility}.
To do this, consider a model where the stochastic utility components are logit with scaling parameter $\sigma > 0$.
Extending the analysis from Example \ref{ex:matching_logit}, the aggregate stable matching $\mu$ is given as a function of $\sigma$ by
\begin{equation}
\mu_{xy} (\sigma) = \min \big\{
\mu_{x0} (\sigma) e^{\alpha_{xy}/\sigma},
\mu_{0y} (\sigma) e^{\gamma_{xy}/\sigma}
\big\},
\end{equation}
where $\mu_{x0} (\sigma)$ and $\mu_{0y} (\sigma)$ solve the system
\begin{equation} \label{eq:system_mu0_sigma}
\begin{array}{l}
\mu_{x0} (\sigma) + \sum_y \min \big\{
\mu_{x0} (\sigma) e^{\alpha_{xy}/\sigma},
\mu_{0y} (\sigma) e^{\gamma_{xy}/\sigma}
\big\} = n_x, \\
\mu_{0y} (\sigma) + \sum_x \min \big\{
\mu_{x0} (\sigma) e^{\alpha_{xy}/\sigma},
\mu_{0y} (\sigma) e^{\gamma_{xy}/\sigma}
\big\} = m_y.
\end{array}
\end{equation}
Then, the following theorem holds:

\begin{theorem} \label{thm:limit}
There are vectors $(u_x) \in \R_+^\X$ and $(v_y) \in \R_+^\Y$ and a matrix $(\mu_{xy}) \in \R_+^{\X \times \Y}$ such that, up to subsequence extraction,
$u_x = -\lim_{\sigma \to 0} \sigma \ln \mu_{x0} (\sigma)$ and
$v_y = -\lim_{\sigma \to 0} \sigma \ln \mu_{0y}(\sigma)$,
and $(\mu,u,v)$ is an aggregate stable matching with money burning from Definition \ref{def:stable_matching_deterministic}.
\end{theorem}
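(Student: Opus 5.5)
The plan is to work directly with the closed-form logit system \eqref{eq:system_mu0_sigma}. Define $u_x(\sigma) := -\sigma \ln \mu_{x0}(\sigma)$ and $v_y(\sigma) := -\sigma \ln \mu_{0y}(\sigma)$. Then
\begin{equation*}
\sigma \ln \mu_{xy}(\sigma) = \min\big\{\alpha_{xy} - u_x(\sigma),\, \gamma_{xy} - v_y(\sigma)\big\}.
\end{equation*}
The first step is to show that $u(\sigma)$, $v(\sigma)$ and $\mu(\sigma)$ stay bounded as $\sigma \to 0$, so that a convergent subsequence can be extracted.

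\textbf{Step 1 (bounds and extraction).} By the accounting identities, $0 \le \mu_{x0}(\sigma) \le n_x$, so $u_x(\sigma) \ge -\sigma \ln n_x$. Also $0 \le \mu_{xy}(\sigma) \le \min\{n_x, m_y\}$.

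For an upper bound on $u_x(\sigma)$, I use $\mu_{xy}(\sigma) \le \mu_{x0}(\sigma) e^{\alpha_{xy}/\sigma} = \exp\big((\alpha_{xy} - u_x(\sigma))/\sigma\big)$. Suppose $u_x(\sigma) \ge \max_y \alpha_{xy}^+ + \delta$ for some $\delta > 0$ along a sequence. Then both $\mu_{x0}(\sigma)$ and every $\mu_{xy}(\sigma)$ are at most $e^{-\delta/\sigma} \to 0$, which contradicts $\mu_{x0} + \sum_y \mu_{xy} = n_x > 0$. The same argument bounds $v_y(\sigma)$.

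By Bolzano--Weierstrass we can then extract a subsequence $\sigma_k \to 0$ along which $u(\sigma_k) \to u$, $v(\sigma_k) \to v$, $\mu_{xy}(\sigma_k) \to \bar\mu_{xy}$, and $\mu_{x0}, \mu_{0y}$ converge to $\bar\mu_{x0}, \bar\mu_{0y}$. The limits satisfy $u, v \ge 0$, $\bar\mu_{x0} = n_x - \sum_y \bar\mu_{xy}$ and $\bar\mu_{0y} = m_y - \sum_x \bar\mu_{xy}$.

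\textbf{Step 2 (conditions (ii)--(vi)).} Since $\mu_{xy}(\sigma)$ is bounded, $\sigma \ln \mu_{xy}(\sigma) \le \sigma \ln C \to 0$. Passing to the limit in the displayed identity gives $\min\{\alpha_{xy} - u_x, \gamma_{xy} - v_y\} \le 0$, i.e.\ $\max\{u_x - \alpha_{xy}, v_y - \gamma_{xy}\} \ge 0$. If $\bar\mu_{xy} > 0$, then $\sigma \ln \mu_{xy}(\sigma) \to 0$, so equality holds; this is condition (iv).

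Similarly, $u_x \ge 0$, and if $\bar\mu_{x0} > 0$ then $u_x = \lim -\sigma \ln \mu_{x0}(\sigma) = 0$; this is (v), and (vi) is symmetric. Feasibility (ii)--(iii) follows from nonnegativity of $\bar\mu_{x0}$ and $\bar\mu_{0y}$.

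\textbf{Step 3 (integrality, the main obstacle).} The limit $\bar\mu$ need not be integer, yet condition (i) of Definition \ref{def:stable_matching_deterministic} requires it. I would fix this as follows. With $(u,v)$ held fixed, let $T = \{xy : \max\{u_x - \alpha_{xy}, v_y - \gamma_{xy}\} = 0\}$ be the set of tight pairs, and consider the polytope of $\mu \ge 0$ satisfying:
\begin{itemize}
\item $\mu_{xy} = 0$ for $xy \notin T$;
\item $\sum_y \mu_{xy} \le n_x$, with equality required whenever $u_x > 0$;
\item $\sum_x \mu_{xy} \le m_y$, with equality required whenever $v_y > 0$.
\end{itemize}
By Step 2, $\bar\mu$ lies in this polytope, so it is nonempty and bounded. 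Its constraint matrix is the incidence matrix of a bipartite graph, hence totally unimodular, and the right-hand sides $n_x, m_y$ are integers. It therefore has an integer vertex $\mu$.

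Every such $\mu$ satisfies (i)--(vi) with the same $(u,v)$, because the ``with equality if positive'' clauses only bind on $T$ and on the types with $u_x = 0$ or $v_y = 0$, which is exactly what the polytope encodes. This yields the matrix $\mu$ claimed in Theorem \ref{thm:limit}; if the statement is meant for the raw limit $\bar\mu$, the only condition that could fail is (i), and this rounding step is precisely what handles it.
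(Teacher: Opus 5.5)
Your proof is correct and follows essentially the same route as the paper. You rescale $u_x(\sigma)=-\sigma\ln\mu_{x0}(\sigma)$, pass to the limit in $\sigma\ln\mu_{xy}(\sigma)=\min\{\alpha_{xy}-u_x(\sigma),\gamma_{xy}-v_y(\sigma)\}$ along a subsequence, and restore integrality through total unimodularity of the bipartite constraints. The differences are only technical: you bound $u_x(\sigma)$ a priori rather than allowing $+\infty$ limits and excluding them afterwards, and you take an integer vertex of the face cut out by the complementary-slackness conditions rather than decomposing $\mu^*$ into integral matrices and using support inclusion, which rests on the same integrality fact.
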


An immediate implication of Theorem \ref{thm:limit} is the existence of an aggregate stable matching with money burning in the deterministic utility case, as studied in section \ref{sec:matching_deterministic_utility}.

%
%

\subsection{Stationary dynamic interpretation}
\label{sec:dynamics}

We further consider a (discrete-time) dynamic model in which the stationary equilibrium coincides with the equilibrium from Definition \ref{def:stable_matching_random} in a static model.
At each period, there are $n_x$ passengers of type $x \in \X$ and $m_y$ taxis of type $y \in \Y$ joining in the market.
Again, the prices are fixed.

The platform tries to clear the market insofar as possible;
however, queues must be formed since in general the number of type-$x$ passengers requesting a type-$y$ taxi at a given time does not coincide with the number of type-$y$ taxis opting to pick up a type-$x$ passenger.
We let $Q_{xy}^\alpha (t)$ be the number of passengers of type $x$ already queuing for a taxi of type $y$ at the beginning of period $t$, and $Q_{xy}^\gamma (t)$ be the number of taxis of type $y$ already queuing for a passenger of type $x$ at the beginning of period $t$.
These queues are the money-burning device that induces waiting times.
If there is no queue, there is zero waiting time:
$\tau_{xy}^\alpha (t) = 0$ if and only if $Q_{xy}^\alpha (t) = 0$,
and $\tau_{xy}^\gamma (t) = 0$ if and only if $Q_{xy}^\gamma (t) = 0$.%
\footnote{A full dynamic queueing model would require specifying a technology mapping queue lengths into waiting times.
For the stationary characterization here, however, that mapping is immaterial: the argument only uses the fact that positive waiting time corresponds to a nonempty queue.}

Focusing on passengers for now, utility is still the sum of a systematic term $\alpha_{xy}$, a random utility term $\varepsilon_{iy}$, and it is quasi-linear in the time waited.
We assume that agents are not forward looking: passengers base their choice upon the current waiting time for taxis of type $y$,
$\tau_{xy}^\alpha (t)$, yielding the decision utility $\alpha_{xy} - \tau_{xy}^\alpha (t) + \varepsilon_{iy}$ for a type-$y$ taxi.
Passengers can also opt out, in which case their systematic utility is normalized to zero.

For exposition purposes, we assume that the random utility components $\varepsilon_{iy}$ are i.i.d.\ logit as in Example \ref{ex:matching_logit}.
As a result, the proportion of type-$x$ passengers who opt for a taxi of type $y$ at time $t$ is
\begin{equation*}
\frac{ \exp \big( \alpha_{xy} - \tau_{xy}^\alpha (t) \big) }{
1 + \sum_{y' \in \Y} \exp \big( \alpha_{xy'} - \tau_{xy'}^\alpha (t) \big)
}.
\end{equation*}
Likewise, taxi $j$ of type $y$ enjoys $\gamma_{xy} - \tau_{xy}^\gamma + \eta_{xj}$ for picking up a type-$x$ passenger.
Taxis also get a systematic utility normalized to zero if they opt out.
Under the same logit assumption, the proportion of type-$y$ taxis who opt for a passenger of type $x$ at time $t$ is
\begin{equation*}
\frac{ \exp \big( \gamma_{xy} - \tau_{xy}^\gamma (t) \big) }{
1 + \sum_{x' \in \X} \exp \big( \gamma_{x'y} - \tau_{x'y}^\gamma (t) \big)
}.
\end{equation*}

We further assume that once their decision is made, agents stay in the same queue.
Thus, once the new wave of passengers has made their choice, the number of type-$x$ passengers lining up for type-$y$ taxis is
\begin{equation*}
Q_{xy}^\alpha(t) + \frac{n_x \exp \big( \alpha_{xy} - \tau_{xy}^\alpha (t) \big) }{
1 + \sum_{y' \in \Y} \exp \big( \alpha_{xy'} - \tau_{xy'}^\alpha (t) \big)
},
\end{equation*}
i.e., those who were already queuing from the previous period plus the newly arrived passengers incrementing the queue.
Similarly, there are
\begin{equation*}
Q_{xy}^\gamma(t) + \frac{m_y \exp \big( \gamma_{xy} - \tau_{xy}^\gamma (t) \big) }{
1 + \sum_{x' \in \X} \exp \big( \gamma_{x'y} - \tau_{x'y}^\gamma (t) \big)
}
\end{equation*}
type-$y$ taxis lining up for type-$x$ passengers, again arising from the queue at the previous period plus the newly arrived taxis.

Out of those two queues, the platform clears out a total number $\mu_{xy}(t)$ of $xy$ matches in period $t$ equal to the size of the shortest queue, that is
\begin{equation*}
\mu_{xy}(t) = \min \left\{
\begin{array}{c}
Q_{xy}^\alpha(t) + \frac{n_x \exp ( \alpha_{xy} - \tau_{xy}^\alpha (t) ) }{
1 + \sum_{y' \in \Y} \exp ( \alpha_{xy'} - \tau_{xy'}^\alpha (t) )
}, \\
Q_{xy}^\gamma(t) + \frac{m_y \exp ( \gamma_{xy} - \tau_{xy}^\gamma (t) ) }{
1 + \sum_{x' \in \X} \exp ( \gamma_{x'y} - \tau_{x'y}^\gamma (t) )
}
\end{array}
\right\},
\end{equation*}
and the lengths of the queues are therefore updated for the next period as
\begin{equation*}
\left\{
\begin{array}{l}
Q_{xy}^\alpha (t+1) =
Q_{xy}^\alpha(t) + \frac{n_x \exp ( \alpha_{xy} - \tau_{xy}^\alpha (t) ) }{
1 + \sum_{y' \in \Y} \exp ( \alpha_{xy'} - \tau_{xy'}^\alpha (t) )
} - \mu_{xy} (t)
\\
Q_{xy}^\gamma (t+1) =
Q_{xy}^\gamma(t) + \frac{m_y \exp ( \gamma_{xy} - \tau_{xy}^\gamma (t) ) }{
1 + \sum_{x' \in \X} \exp ( \gamma_{x'y} - \tau_{x'y}^\gamma (t) )
} - \mu_{xy} (t).
\end{array}
\right.
\end{equation*}
Clearly, $\min \big\{ Q_{xy}^\alpha (t+1), Q_{xy}^\gamma (t+1) \big\} = 0$; therefore
\begin{equation*}
\min \big\{ \tau_{xy}^\alpha (t+1), \tau_{xy}^\gamma (t+1) \big\} = 0.
\end{equation*}

In the stationary state, the lengths of the queues and the waiting times remain constant.
As a result,
\begin{equation*}
\mu_{xy} = \frac{n_x \exp \big( \alpha_{xy}-\tau_{xy}^\alpha\big) }{
1 + \sum_{y' \in \Y} \exp \big( \alpha_{xy'} - \tau_{xy'}^\alpha \big) }
= \frac{m_y \exp \big( \gamma_{xy} - \tau_{xy}^\gamma \big) }{
1 + \sum_{x' \in \X} \exp \big( \gamma_{x'y} - \tau_{x'y}^\gamma \big) }
\end{equation*}
and
\begin{equation*}
\min \big\{ \tau_{xy}^\alpha, \tau_{xy}^\gamma \big\} = 0,
\end{equation*}
which are exactly the conditions for an aggregate stable matching with money burning for the static model (Definition \ref{def:stable_matching_random}).
It is straightforward to extend this analysis beyond the logit case to more general distributions, by replacing the logit choice probabilities with the general choice probabilities featured in equation \eqref{eq:market_clearing}.


\section{Deferred Acceptance for Matching with Random Utility}
\label{sec:deferred_acceptance}

In this section, we propose a deferred acceptance algorithm for our matching model with a continuum of agents, observable types, idiosyncratic shocks, and waiting lines.
The key insight is to view \citeasnoun{GaleShapley1962}'s classical algorithm as iterating between two \emph{constrained discrete-choice} problems.
At each round, one side chooses under an endogenous availability cap determined by the offers received from the other side.
In our setting, these caps are enforced by waiting times, which act as shadow prices: over-requested segments carry positive waiting times that reduce demand to capacity, while under-requested segments have zero waiting time.
The algorithm alternates the induced constrained-demand maps on both sides until proposals and tentative acceptances coincide, yielding an aggregate stable matching.

\subsection{Constrained choice.}
\label{sec:constrained_choice}
We begin by formalizing the constrained choice problem faced by a single side.
Recall from section \ref{sec:matching_random_utility} that
$\boldsymbol \mu_{xy}^\alpha (\tau^\alpha)$ denotes the the passenger-side demands, and
$\boldsymbol \mu_{xy}^\gamma (\tau^\gamma)$ the taxi-side demands, as functions of their waiting times.
Since the same construction will apply to both sides (passengers choosing taxis, or taxis choosing passengers), we lighten notation by dropping superscripts in this subsection.
Thus, $\boldsymbol \mu(\tau)$ denotes the demand induced by a waiting-time vector $\tau$, where $\boldsymbol\mu$ and $\tau$ can stand either for $\boldsymbol\mu^\alpha$ and $\tau^\alpha$, or for $\boldsymbol\mu^\gamma$ and $\tau^\gamma$.

We fix a capacity matrix $\bar \mu = (\bar \mu_{xy})$ and think of $\bar\mu_{xy} > 0$ as the maximum number of matches of type $xy$ that can be accommodated on this side of the market.%
\footnote{Capacities must be strictly positive, as Assumption \ref{ass:nonvanishing} excludes zero demand regardless of waiting time.}
Waiting times act as the instrument that enforces these constraints:
if a segment $xy$ is over-demanded, the waiting time $\tau_{xy}$ should be positive to adjust demand to capacity;
but if the segment is under-demanded, the waiting time should be zero.

Formally, we look for $\tau \geq 0$ such that (i) demand does not exceed capacity in any segment, and (ii) waiting lines form only when the capacity constraint is saturated:
\begin{equation}  \label{eq:cap_ineq}
\boldsymbol \mu_{xy} (\tau) \leq \bar \mu_{xy},
\quad 
\tau_{xy} \big( \bar \mu_{xy} - \boldsymbol \mu_{xy} (\tau) \big) =0,
\qquad \forall x \in \X, y \in \Y.
\end{equation}
If the capacity constraint is not binding ($\boldsymbol \mu_{xy}(\tau) < \bar \mu_{xy}$), then $\tau_{xy}=0$;
conversely, if $\tau_{xy} > 0$, then the constraint must bind ($\boldsymbol \mu_{xy}(\tau) = \bar \mu_{xy}$).
As in section \ref{sec:matching_random_utility}, it is convenient to encode this complementarity structure through a single unconstrained variable.
To this end, introduce the \emph{overcapacity} variable $\rho_{xy} \geq 0$ as the slack of the capacity constraint:
\begin{equation*}
\rho_{xy} = \bar \mu_{xy} - \boldsymbol \mu_{xy} (\tau).
\end{equation*}
We can then consider $\tau$ and $\rho$ as the positive and negative parts of an unconstrained variable $\theta = \tau - \rho$, so that
\begin{equation*}
\tau = \theta^+
\quad \text{and} \quad
\rho = \theta^-.
\end{equation*}
The problem \eqref{eq:cap_ineq} then boils down to finding $\theta$ such that
\begin{equation} \label{eq:theta_equation}
\boldsymbol\mu(\theta^+) + \theta^- = \bar\mu.
\end{equation}

As for Theorem \ref{thm:ExistenceUniquenessEquil}, results on M-functions from \citeasnoun{Rheinboldt1974} (see Appendix \ref{apx:Mfunctions}) allow us to prove existence and uniqueness of the solution to this problem.

\begin{theorem} \label{thm:constrained_choice_unique}
Under Assumption \ref{ass:nonvanishing}, there exists a unique solution $\theta$ to the constrained choice problem \eqref{eq:theta_equation}.
\end{theorem}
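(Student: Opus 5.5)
We study the map $F:\R^{\X\times\Y}\to\R^{\X\times\Y}$ defined by $F(\theta) = \bar\mu - \boldsymbol\mu(\theta^+) - \theta^-$. We show it is a continuous, surjective M-function in the sense of \citeasnoun{Rheinboldt1974}, as recalled in Appendix \ref{apx:Mfunctions}. We then invoke the inverse-isotone and bijectivity results there, exactly as in the proof of Theorem \ref{thm:ExistenceUniquenessEquil}. Equivalently, we may work with $G(\theta)=\boldsymbol\mu(\theta^+)+\theta^- - \bar\mu$ up to a sign convention. Uniqueness then follows from injectivity (inverse isotonicity). Existence follows from a coercivity or limit condition guaranteeing surjectivity onto a neighborhood containing $0$.

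\textbf{Step 1 (off-diagonal antitonicity / isotonicity).} Fix a segment $xy$ and vary $\theta_{x'y'}$ with $x'y'\neq xy$. Increasing $\theta_{x'y'}$ weakly increases $\tau_{x'y'}=\theta_{x'y'}^+$. By the substitution property of discrete choice noted after \eqref{eq:demand_taxis}, this weakly increases $\boldsymbol\mu_{xy}$ when $x'=x$, and leaves it unchanged otherwise. Meanwhile $\theta^-_{xy}$ is unaffected. Hence $G_{xy}$ is weakly increasing in the off-diagonal arguments, so $-G$ (or $G$, depending on the orientation) is off-diagonally antitone.

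\textbf{Step 2 (strict diagonal monotonicity).} As a function of $\theta_{xy}$ alone, $G_{xy}$ has two regimes. For $\theta_{xy}\le 0$, $\boldsymbol\mu_{xy}$ is constant (since $\tau_{xy}=0$) while $\theta^-_{xy}=-\theta_{xy}$ is strictly decreasing. For $\theta_{xy}\ge0$, $\theta^-_{xy}=0$ and $\boldsymbol\mu_{xy}(\theta^+)$ is strictly decreasing by Assumption \ref{ass:nonvanishing}. So $G_{xy}$ is strictly decreasing in its own argument. Thus $H:=-G$ is strictly increasing on the diagonal and off-diagonally antitone.

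\textbf{Step 3 (M-function).} To conclude that $H$ is an M-function, we also need the "sum" condition of Rheinboldt's characterization. This is a diagonal-dominance-type property: moving all coordinates up together along a set $S$ increases $H$ on $S$. We verify it as follows. Raise $\theta$ by $t\ge0$ on a set $S$ of segments. For a segment $xy\in S$, the total demand of row $x$ over $S$ plus the outside option is conserved. Therefore $\sum_{xy\in S}\boldsymbol\mu_{xy}$ weakly decreases, and strictly so if any coordinate in $S$ is positive (nonvanishing density). Meanwhile $\sum_{S}\theta^-$ weakly decreases, strictly where coordinates are negative. This gives strict isotonicity of $\theta\mapsto\sum_{S}H$ along such uniform shifts, which is the standard sufficient condition (as used for Theorem \ref{thm:ExistenceUniquenessEquil}). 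Continuity of $H$ follows from continuity of the choice probabilities under Assumption \ref{ass:nonvanishing}. Injectivity, and hence uniqueness, follows.

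\textbf{Step 4 (existence; main obstacle).} Surjectivity, or at least $0\in H(\R^{\X\times\Y})$, is the delicate step. We plan to use Rheinboldt's result that a continuous M-function is onto provided it is "norm-coercive" coordinatewise: $H_{xy}(\theta)\to\pm\infty$ as $\theta_{xy}\to\pm\infty$ with the others fixed. As $\theta_{xy}\to-\infty$, $H_{xy}=\boldsymbol\mu_{xy}+\theta^-_{xy}-\bar\mu_{xy}\to+\infty$. As $\theta_{xy}\to+\infty$, however, $H_{xy}\to-\bar\mu_{xy}$, which is only bounded, so coercivity fails. Instead we construct explicit sub- and super-solutions $\underline\theta\le\overline\theta$ with $H(\underline\theta)\le0\le H(\overline\theta)$:
\begin{itemize}[topsep=0pt,itemsep=0pt]
\item Take $\underline\theta$ large positive in every coordinate. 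Then all demands $\boldsymbol\mu_{xy}(\underline\theta^+)$ fall below the strictly positive $\bar\mu_{xy}$, since demand tends to $0$ as waiting grows. This gives $H(\underline\theta)<0$.
\item Take $\overline\theta$ very negative, with $\theta^-_{xy}\ge\bar\mu_{xy}$. This gives $H(\overline\theta)\ge0$.
\end{itemize}
The orientation ($\underline\theta\ge\overline\theta$ here) must be matched to the M-function order; we adjust signs accordingly. The existence of a zero between ordered sub/super-solutions is then obtained from the monotone (Gauss–Seidel/Jacobi) iteration theorem for continuous M-functions in Appendix \ref{apx:Mfunctions}. Checking that the positivity $\bar\mu_{xy}>0$ is exactly what makes the large-waiting super/sub-solution work is the key point.
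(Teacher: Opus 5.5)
Your overall architecture matches the paper's, and Steps 1--2 are fine. Your $H$ from Step 2 is exactly the paper's $\mathbf q(\theta)=\bar\mu-\theta^--\boldsymbol\mu(\theta^+)$. Uniqueness comes from showing it is an M-function, and existence comes from the points $-\bar\mu$ and $c\mathbf 1$ plus a monotone iteration.

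The step that fails as written is Step 3. The property you actually verify is strict increase of $\sum_{S}H$ along uniform shifts $\theta+t\mathbf 1_S$, for every $S$. Together with off-diagonal antitonicity, this does not imply strong nonreversal. Take the linear map with matrix $A=\bigl(\begin{smallmatrix}10&-1.05\\ -1&0.1\end{smallmatrix}\bigr)$. Its off-diagonal entries are nonpositive, and all your block rates are positive ($10$, $0.1$, $8.05$). Yet $d=(1,10)$ gives $Ad=(-0.5,0)\le 0$, so $\theta=0\le\theta'=d$ violates strong nonreversal. The problem is that $\theta'-\theta$ is generally not a multiple of an indicator vector, so uniform shifts never produce the comparison you need. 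Separately, your justification that row-$x$ demand over $S$ plus the outside option is conserved holds only when $S$ contains the whole row.

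The criterion actually used for Theorem \ref{thm:ExistenceUniquenessEquil} is different: the sum over \emph{all} segments must be strictly increasing in \emph{each single} coordinate. Your ingredients deliver exactly this. On the passenger side (the taxi side is symmetric),
\[
\sum_{xy}H_{xy}(\theta)=\sum_{xy}\bar\mu_{xy}-\sum_x n_x+\sum_x\boldsymbol\mu_{x0}(\theta^+)-\sum_{xy}\theta^-_{xy}.
\]
As a function of $\theta_{xy}$ alone, this is strictly increasing on $(-\infty,0]$ through $-\theta^-_{xy}$, and on $[0,\infty)$ through $\boldsymbol\mu_{x0}$. Hence $\theta\le\theta'$, $\theta\neq\theta'$ and $H(\theta)\ge H(\theta')$ are contradictory after summing. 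The paper reaches the same conclusion slightly differently. It splits the inequality $\mathbf q(\theta)\ge\mathbf q(\theta')$ into its $\theta^+$ and $\theta^-$ parts and invokes Lemma \ref{lem:demand_inverse_isotone} for $-\boldsymbol\mu$, which rests on the same full-sum computation.

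Step 4 has a sign slip: there you revert to $G=-H$ (you write $H_{xy}=\boldsymbol\mu_{xy}+\theta^-_{xy}-\bar\mu_{xy}$). That is why your sub- and super-solutions come out in the wrong order. With $H$ as in Step 2, $H(-\bar\mu)=-\boldsymbol\mu(0)\le 0$, and $H(c\mathbf 1)=\bar\mu-\boldsymbol\mu(c\mathbf 1)\ge 0$ for $c$ large; this is where $\bar\mu>0$ enters. Since $-\bar\mu\le c\mathbf 1$, this is already the orientation an M-function requires, and no sign adjustment is needed.

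Also, Appendix \ref{apx:Mfunctions} contains no monotone-iteration theorem. As the paper does, you must run the iteration from the existence proof of Theorem \ref{thm:ExistenceUniquenessEquil} yourself:
\begin{enumerate}
\item Start at $-\bar\mu$.
\item Solve each coordinate equation $H_{xy}(\cdot,\theta^s_{-xy})=0$ by the intermediate value theorem, using continuity and own-coordinate monotonicity.
\item Use off-diagonal antitonicity to keep $H(\theta^{s+1})\le 0$ and $\theta^{s+1}\le c\mathbf 1$.
\item Pass to the limit by continuity.
\end{enumerate}
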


Given $\bar \mu$, we define the resulting \emph{constrained demand} on this side of the market by
\begin{equation*}
\boldsymbol c_{xy} (\bar \mu) = \boldsymbol \mu_{xy} (\theta^+),
\qquad \forall x \in \X, y \in \Y,
\end{equation*}
where $\theta$ is the unique solution to equation \eqref{eq:theta_equation}.
Reintroducing the superscript notations, $\boldsymbol c^\alpha (\bar \mu^\alpha)$ is thus the constrained demand by passengers $x$ for taxis $y$ under the capacity constraints $\bar \mu^\alpha$,
and $\boldsymbol c^\gamma (\bar \mu^\gamma)$ is the constrained demand by taxis $y$ for passengers $x$ under the capacity constraints $\bar \mu^\gamma$.

\begin{example}
In the logit case, the problem \eqref{eq:cap_ineq} is to find $\tau \geq 0$ and $\mu \geq 0$ such that
\begin{equation*}
\mu_{xy} = \mu_{x0 }\exp (\alpha_{xy} - \tau_{xy}),
\qquad
\mu_{xy} \leq \bar \mu_{xy},
\qquad
\min \big\{ \tau_{xy}, \bar \mu_{xy} - \mu_{xy} \big\} = 0.
\end{equation*}
From these conditions we obtain
\begin{equation*}
\mu_{xy} = \min \big\{ \mu_{x0} \exp (\alpha_{xy}), \bar \mu_{xy} \big\}
\end{equation*}
and $\mu_{x0}$ is therefore solution to the scalar equation
\begin{equation}
\mu_{x0} + \sum_{y \in \Y} \min \big\{ \mu_{x0} \exp (\alpha_{xy}), \bar \mu_{xy} \big\} = n_x.
\end{equation}
This equation has a unique solution since the left-hand side is continuous and strictly increasing in $\mu_{x0}$ from $\R_+$ to $\R_+$.
\end{example}

\subsection{Deferred Acceptance.}
Recall the principle of the classical Gale--Shapley deferred acceptance algorithm:
passengers make offers to taxis; taxis tentatively keep their favorite offers and reject the rest.
In the next round, rejected passengers make offers to taxis who have not yet rejected them.
This process repeats until no rejection occurs, and the resulting matching is stable.

We adapt this idea to our setting with a continuum of agents, observable types, idiosyncratic shocks, and waiting lines.
The algorithm alternates between two constrained choice problems as studied in section \ref{sec:constrained_choice}, and keeps track of which offers remain available over time.
Let $\mu_{xy}^{A,t-1}$ denote the number of offers that are still available from passengers of type $x$ to taxis of type $y$ at the beginning of round $t$; initially, all offers are available up to the binding population constraint, so $\mu_{xy}^{A,0}=\min\{n_x,m_y\}$.
Given available offers $\mu^{A,t-1}$, passengers form their constrained demand and generate a proposal matrix $\mu^{P,t} = \boldsymbol c^\alpha (\mu^{A,t-1})$, so that $\mu_{xy}^{P,t}$ is the volume of offers from passengers $x$ to taxis $y$ at round $t$.
Taxis then solve the analogous constrained choice problem given the incoming proposals, and keep (tentatively accept) a matrix $\mu^{K,t} = \boldsymbol c^\gamma (\mu^{P,t})$, so that $\mu_{xy}^{K,t}$ is the volume of offers from passengers $x$ to taxis $y$ that taxis keep at round $t$.
Rejections equal $\mu^{P,t}-\mu^{K,t}$ and are removed from future availability, so the available offer pool shrinks over time until no rejection remains.
Formally, the algorithm is described as follows:

\begin{algorithm}  \label{alg:darum}
Step $0$.
Initialize the number of offers available to passengers as
\begin{equation*}
\mu_{xy}^{A,0} = \min \{n_x, m_y\}.
\end{equation*}

Step $t \geq 1$. There are three phases:

\underline{Proposal phase}: Passengers propose offers subject to
availability constraints:
\begin{equation*}
\mu^{P,t} = \boldsymbol c^\alpha (\mu^{A,t-1}).
\end{equation*}

\underline{Disposal phase}: Taxis keep their best offers among the
proposals:
\begin{equation*}
\mu^{K,t} = \boldsymbol c^\gamma (\mu^{P,t}).
\end{equation*}

\underline{Update phase}:
The offers rejected by taxis are removed from the pool of offers available to passengers:
\begin{equation*}
\mu^{A,t} = \mu^{A,t-1} - \big( \mu^{P,t} - \mu^{K,t} \big).
\end{equation*}

The algorithm stops and returns $\mu = \mu^{P,t}$ when the norm of $\mu^{P,t} - \mu^{K,t}$ is below some
tolerance level.
\end{algorithm}

Numerically, the proposal and disposal phases each require solving the constrained choice problem \eqref{eq:theta_equation}.
This can be achieved using the constructive method found in the proof of Theorem \ref{thm:constrained_choice_unique}.
The next theorem establishes that this procedure converges and characterizes its limit as the unique aggregate stable matching with money burning.


\begin{theorem}  \label{thm:DA_convergence}
Under Assumption \ref{ass:nonvanishing}, Algorithm \ref{alg:darum} converges to a limit matching $\mu$.
Furthermore, let $\theta^\alpha$ be the unique solution to \eqref{eq:theta_equation} with
$\boldsymbol\mu=\boldsymbol\mu^\alpha$ and $\bar\mu=\mu$, and define $\tau^\alpha=(\theta^\alpha)^+$.
Likewise, let $\theta^\gamma$ be the unique solution to \eqref{eq:theta_equation} with
$\boldsymbol\mu=\boldsymbol\mu^\gamma$ and $\bar\mu=\mu$, and define $\tau^\gamma=(\theta^\gamma)^+$.
Then $(\mu,\tau^\alpha,\tau^\gamma)$ is the aggregate stable matching with money burning for the random utility model.
\end{theorem}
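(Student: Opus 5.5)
The approach is a monotonicity argument. We show that the sequence of available offers $\mu^{A,t}$ is monotone, pass to the limit using continuity of the constrained demand maps, and then check that the limit satisfies Definition \ref{def:stable_matching_random}. Uniqueness (Theorem \ref{thm:ExistenceUniquenessEquil}) will then identify the limit as the unique aggregate stable matching.

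\textbf{Step 1: properties of the constrained demand map.} We first establish three properties of $\boldsymbol c(\bar\mu)$.

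\emph{(a) Feasibility.} $\boldsymbol c(\bar\mu)\le\bar\mu$, which holds by \eqref{eq:theta_equation} since $\theta^-\ge 0$.

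\emph{(b) Continuity.} $\boldsymbol c$ is continuous. This follows from the fact that $\theta\mapsto\boldsymbol\mu(\theta^+)+\theta^-$ is a continuous M-function, so its inverse, given by Theorem \ref{thm:constrained_choice_unique}, is continuous.

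\emph{(c) Isotonicity of the solution.} The solution $\theta(\bar\mu)$ is antitone in $\bar\mu$, which is the inverse-isotonicity of M-functions (Appendix \ref{apx:Mfunctions}). Consequently $\tau=\theta^+$ decreases when capacities increase. By gross substitutability of discrete choice, lowering $\tau_{x'y'}$ for $x'y'\neq xy$ weakly lowers demand in segment $xy$. Hence, if $\bar\mu\le\bar\mu'$, then for every segment with $\bar\mu'_{xy}=\bar\mu_{xy}$ we get $\boldsymbol c_{xy}(\bar\mu')\le \boldsymbol c_{xy}(\bar\mu)$, and $\boldsymbol c(\bar\mu')-\boldsymbol c(\bar\mu)$ satisfies suitable comparative statics.

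\textbf{Step 2: monotone convergence.} By (a), $\mu^{K,t}\le\mu^{P,t}$, so $\mu^{A,t}\le\mu^{A,t-1}$. The sequence $\mu^{A,t}$ is therefore nonincreasing componentwise. It is also bounded below: we need $\mu^{A,t}\ge\mu^{K,t}>0$, which holds because $\mu^{A,t}-\mu^{K,t}=\mu^{A,t-1}-\mu^{P,t}\ge 0$ by (a). Hence $\mu^{A,t}\to\mu^{A}$ exists, and the telescoped rejections $\sum_t(\mu^{P,t}-\mu^{K,t})$ converge, so $\mu^{P,t}-\mu^{K,t}\to0$.

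There is a subtlety: capacities must stay strictly positive for Theorem \ref{thm:constrained_choice_unique} to apply. I would argue that $\mu^{K,t}$ is bounded below by the constrained demand of taxis at capacities bounded away from zero, using Assumption \ref{ass:nonvanishing}, so the limit $\mu^A$ is strictly positive. By continuity (b), $\mu^{P,t}\to\mu:=\boldsymbol c^\alpha(\mu^A)$ and $\mu^{K,t}\to\boldsymbol c^\gamma(\mu)=\mu$. Thus the whole sequence converges, not merely along a subsequence.

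\textbf{Step 3: the limit is an equilibrium.} Let $\theta^\alpha,\theta^\gamma$ be as in the statement, with capacities $\mu$. Since $\boldsymbol c^\gamma(\mu)=\mu$, equation \eqref{eq:theta_equation} gives $(\theta^\gamma)^-=0$ and $\boldsymbol\mu^\gamma(\tau^\gamma)=\mu$. For passengers, let $\tilde\theta$ solve the problem at capacity $\mu^A$. Then $\boldsymbol\mu^\alpha(\tilde\theta^+)=\mu$ and $\mu+\tilde\theta^-=\mu^A$, so $\tilde\theta^+-0$ solves the problem at capacity $\mu$. By uniqueness, $\tau^\alpha=\tilde\theta^+$ and $\boldsymbol\mu^\alpha(\tau^\alpha)=\mu$. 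Market clearing \eqref{eq:market_clearing} follows.

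For one-sided money burning \eqref{eq:one_sided_money_burning}, suppose $\tau^\gamma_{xy}>0$. Then taxis are rationed in segment $xy$: they would like more $x$-proposals than the $\mu_{xy}^{P,t}$ they received. I must show this forces $\tau^\alpha_{xy}=0$, i.e.\ $\mu_{xy}<\mu^A_{xy}$, meaning passengers have slack. The idea is that if $\tau^\alpha_{xy}>0$, then passengers' availability in $xy$ binds. But availability only shrinks through rejections, and segments where taxis are rationed (positive $\tau^\gamma$) are never rejected along the tail of the sequence. A limiting argument is needed here, since $\mu^A_{xy}$ equals $\min\{n_x,m_y\}$ minus the cumulative rejections. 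Conversely, the binding of $\mu^A$ for passengers is related to past rejections in $xy$, which require taxis to have had slack, i.e.\ $\tau^{\gamma,t}_{xy}=0$ and $\theta^{\gamma,t}_{xy}\le0$ at those rounds.

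\textbf{Main obstacle.} Passing this complementarity to the limit is the hardest step. I would first try to show that the sign pattern of $\theta^{\gamma,t}_{xy}$ is eventually constant by monotonicity: proposals $\mu^{P,t}$ are eventually monotone in the relevant coordinates, so by (c) the vector $\theta^{\gamma,t}$ is monotone. Alternatively, I would exploit the case $\mu^A_{xy}=\min\{n_x,m_y\}$ (no rejection ever occurred in $xy$), which makes the passenger constraint nonbinding by Assumption \ref{ass:nonvanishing}, since the unconstrained demand $\boldsymbol\mu^\alpha_{xy}<n_x$. Once both conditions hold, Theorem \ref{thm:ExistenceUniquenessEquil} shows that $(\mu,\tau^\alpha,\tau^\gamma)$ is the unique aggregate stable matching with money burning.
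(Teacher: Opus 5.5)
Your Steps 1--3 correctly give monotonicity of $\mu^{A,t}$, vanishing rejections, and market clearing in the limit. There is, however, a genuine gap exactly where you flag it: one-sided money burning is never established. Strict positivity of $\mu^A$ is also only asserted. Your proposed lower bound via the taxis' constrained demand is circular, because the taxis' capacities are the $\mu^{P,t}$ themselves.

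The missing idea, which is what the paper uses, is monotonicity of the \emph{waiting times} rather than of the capacities. On the passenger side, your property (c) already gives $\theta^{\alpha,t}\le\theta^{\alpha,t+1}$, since $\mu^{A,t}\le\mu^{A,t-1}$. So the passenger slack $(\theta^{\alpha,t})^{-}=\mu^{A,t-1}-\mu^{P,t}$ weakly shrinks. Rearranged with the update rule, this says $\mu^{K,t}\le\mu^{P,t+1}$: offers kept by taxis are proposed again in the next round. On the taxi side, the capacities $\mu^{P,t}$ are not monotone, because a rejection in one segment redirects proposals toward others. Your suggestion that proposals become monotone and that (c) then applies would therefore not go through. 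It is also unnecessary. Evaluate the taxi-side M-function $\mathbf q(\theta)=\mu^{P,t+1}-\theta^{-}-\boldsymbol\mu^\gamma(\theta^{+})$ at the previous waiting time. This gives $\mathbf q(\tau^{\gamma,t})=\mu^{P,t+1}-\mu^{K,t}\ge 0=\mathbf q(\theta^{\gamma,t+1})$, so inverse isotonicity yields $\tau^{\gamma,t+1}\le\tau^{\gamma,t}$.

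With $\tau^{\gamma,t}$ nonincreasing, both holes close. First, $\mu^{A,t}_{xy}\ge\mu^{K,t}_{xy}=\boldsymbol\mu^\gamma_{xy}(\tau^{\gamma,t})\ge\boldsymbol\mu^\gamma_{xy}(\tau^{\gamma,1}_{xy},0_{-xy})>0$, uniformly in $t$. Together with $\mu^{P,t}\ge\mu^{K,t-1}$, this keeps the nondecreasing sequence $\tau^{\alpha,t}$ bounded. Both waiting-time sequences then converge, and the limit follows from continuity of $\boldsymbol\mu^\alpha$ and $\boldsymbol\mu^\gamma$ alone, without your property (b).

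Second, complementarity holds at \emph{every} round. Suppose $\tau^{\gamma,t}_{xy}>0$. Then $\tau^{\gamma,s}_{xy}>0$ for all $s\le t$, so taxis were saturated in $xy$ throughout. No $xy$ offer was ever rejected, and $\mu^{A,t-1}_{xy}=\min\{n_x,m_y\}$. A binding passenger constraint would then force $\mu^{P,t}_{xy}=\mu^{K,t}_{xy}=\min\{n_x,m_y\}$. This is impossible under Assumption~\ref{ass:nonvanishing}, since $\boldsymbol\mu^\alpha_{xy}<n_x$ and $\boldsymbol\mu^\gamma_{xy}<m_y$. The identity $\min\{\tau^{\alpha,t}_{xy},\tau^{\gamma,t}_{xy}\}=0$ then passes to the limit by continuity.

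Your purely limiting argument cannot do this on its own. From $\tau^\gamma_{xy}>0$ in the limit, you only learn that there are no $xy$ rejections in the tail. That leaves open $\mu^A_{xy}<\min\{n_x,m_y\}$ because of early rejections, which is precisely the configuration where passengers could be capacity-constrained while taxis wait. Monotonicity of $\tau^{\gamma,t}$ is what rules out early rejections. It also turns your second idea (use $\mu^A_{xy}=\min\{n_x,m_y\}$) into a complete argument.
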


A consequence of Theorem \ref{thm:DA_convergence} is that, unlike in classical Gale--Shapley deferred acceptance for the finite-agent model, the output of Algorithm \ref{alg:darum} does not depend on which side proposes first.
Indeed, uniqueness of the aggregate stable matching with money burning (Theorem \ref{thm:ExistenceUniquenessEquil}) implies that the same stable matching is reached regardless of which side makes offers and which side accepts them.
Moreover, Algorithm \ref{alg:darum} converges only asymptotically, rather than reaching an exact outcome in finitely many steps as in Gale--Shapley deferred acceptance.
These differences reflect the fact that our framework is designed for a large-market environment with aggregate uncertainty and type heterogeneity, rather than for a finite set of agents with ordinal preferences.

The iterative proposal--retention procedure of Algorithm \ref{alg:darum} belongs to the broad class of deferred acceptance algorithms initiated by \citeasnoun{GaleShapley1962} and extended by \citeasnoun{KelsoCrawford1982} and \citeasnoun{HatfieldMilgrom2005}.
But while these papers study finite-agent markets, typically many-to-one and with salaries or contracts, our setting is a one-to-one, fixed-price NTU environment with a continuum of agents aggregated into observable types.
Our framework also imposes free disposal and delivers an aggregate equilibrium with equal treatment within type, which may not be the case for the stable outcomes reached by the centralized matching procedures mentioned above.
Crucially, in our framework overdemand is resolved not through contract terms but through endogenous money burning, leading to an equilibrium where all agents obtain their first choice.


\section{Conclusion}
\label{sec:conclusion}


We conclude by discussing the potential econometric consequences of our equilibrium notion relative to existing models.
A key implication of our equilibrium notion is that, in the logit case, it delivers a Leontief aggregate matching function which contrasts with two prominent alternatives. In the Dagsvik--Menzel framework, where idiosyncratic tastes vary at the individual-identity level, the implied NTU matching function takes a multiplicative form with scale effects,
\begin{equation}
\mu_{xy} = \mu_{x0} \mu_{0y} \exp (\alpha_{xy}+\gamma_{xy}),
\label{NTU_MMF_Menzel}
\end{equation}
while in Choo and Siow's separable TU logit model the matching function is Cobb--Douglas,%
\footnote{See \citeasnoun{MourifieSiow2021} for a survey of the aggregate matching function.}
\begin{equation}
\mu_{xy} = \sqrt{\mu_{x0}\mu_{0y}\exp (\alpha_{xy}+\gamma_{xy})}.
\label{Choo-SiowMMF}
\end{equation}
By contrast, our notion of aggregate stable matching with one-sided money burning yields the Leontief form
\begin{equation}
\mu_{xy} = \min \big\{ \mu_{x0}\exp \alpha_{xy},\mu_{0y}\exp \gamma_{xy} \big\}.
\label{ourMMF}
\end{equation}
This difference reflects the microeconomic mechanism that clears the market. With fixed prices and no transfers, scarcity is resolved by waiting on the short side, so realized matches are determined by the binding side of the market-segment constraint.
This Leontief functional form for the matching function has distinctive econometric implications, which we leave open for future work.



\bibliographystyle{econometrica}
\bibliography{refAll}

@Article{HatfieldMilgrom2005,
  author        = {Hatfield, John William and Milgrom, Paul R},
  title         = {Matching with contracts},
  journal       = {American Economic Review},
  year          = {2005},
  pages         = {913--935},
  __markedentry = {[Baiyu:]},
}

@Article{ChooSiow2006,
  author  = {Choo, E. and A. Siow},
  title   = {Who Marries Whom and Why},
  journal = {Journal of Political Economy},
  year    = {2006},
  volume  = {114},
  number  = {1},
  pages   = {175-201},
}

@Article{Dagsvik2000,
  author  = {Dagsvik, J. K.},
  title   = {Aggregation in Matching Markets},
  journal = {International Economic Review},
  year    = {2000},
  volume  = {41},
  number  = {1},
  pages   = {27-57},
}

@Article{GalichonSalanie2022,
  author  = {Galichon, A. and B. Salani\'{e}},
  title   = {Cupid’s Invisible Hand: Social Surplus and Identification in Matching Models},
  journal = {The Review of Economic Studies},
  year    = {2022},
  volume  = {89},
  number  = {5},
  pages   = {2600--2629},
}

@Article{Menzel2015,
  author  = {Menzel, K.},
  title   = {Matching Markets as Two-Sided Demand Systems},
  journal = {Econometrica},
  year    = {2015},
  volume  = {83},
  number  = {3},
  pages   = {897-941},
}

@Article{ShapleyShubik1972,
  author  = {Shapley, L. S. and M. Shubik},
  title   = {The Assignment Game, I: The Core},
  journal = {International Journal of Game Theory},
  year    = {1972},
  volume  = {1},
  pages   = {111-130},
}

@Article{Becker1973,
  author  = {G. S. Becker},
  title   = {A theory of marriage: part I},
  journal = {Journal of Political Economy},
  year    = {1973},
  volume  = {81},
  pages   = {813-846},
}

@Article{CheKoh2016,
  author  = {Che, Y.-K. and Koh, Y.},
  title   = {Decentralized College Admissions},
  journal = {Journal of Political Economy},
  year    = {2016},
  volume  = {124},
  pages   = {1295-1337},
}

@Article{EcheniqueLeeShumYenmex,
  author  = {Echenique, F. and S. M. Lee and M. Shum and M. B. Yenmez},
  title   = {The Revealed Preference Theory of Stable and Extremal Stable Matchings},
  journal = {Econometrica},
  year    = {2013},
  volume  = {81},
  pages   = {153-171},
}

@Article{GaleShapley1962,
  author  = {Gale, D. and L. S. Shapley},
  title   = {College Admissions and the Stability of Marriage},
  journal = {The American Mathematical Monthly},
  year    = {1962},
  volume  = {69},
  pages   = {9-15},
}

@book{McFadden1976,
  title={The mathematical theory of demand models},
  author={MacFadden, Daniel L},
  year={1976},
  publisher={Lexington Books}
}

@Article{GalichonKominersWeber2019,
  author  = {Galichon, A. and S. Kominers and S. Weber},
  title   = {Costly Concessions: An Empirical Framework for Matching with Imperfectly Transferable
  Utility},
  journal = {Journal of Political Economy},
  year    = {2019},
  volume  = {127},
  number  = {6},
  pages   = {2875-2925},
}

@Article{MourifieSiow2021,
  author  = {Mourifi\'{e}, I. and A. Siow},
  title   = {The Cobb-Douglas marriage matching function: Marriage matching with peer and scale effects},
  journal = {Journal of Labor Economics},
  year    = {2021},
  volume  = {39},
  number  = {S1},
  pages   = {S239--S274},
}

@Article{Bonnet2020,
  author  = {O. Bonnet and A. Galichon and Y.-W. Hsieh and K. O'Hara and M. Shum},
  title   = {Yogurts Choose Consumers? Estimation of Random-Utility Models via Two-Sided Matching},
  journal = {The Review of Economic Studies},
  year    = {2022},
  volume  = {89},
  number  = {6},
  pages   = {3085--3114},
}

@Article{EcheniqueYariv2013,
  author  = {Echenique, F. and Alejandro Robinson--Cortés and L. Yariv},
  title   = {An Experimental Study of Decentralized Matching},
  journal = {Quantitative Economics},
  year    = {2025},
  volume  = {16},
  pages   = {497–533},
}

@Article{Gale1996,
  author  = {D. Gale},
  title   = {Equilibria and Pareto Optima of Markets with Adverse Selection},
  journal = {Economic Theory},
  year    = {1996},
  volume  = {7},
  pages   = {207-235},
}

@Article{NiederleYariv2009,
  author  = {Niederle, M. and L. Yariv},
  title   = {Decentralized Matching with Aligned Preferences},
  journal = {NBER Working Paper},
  year    = {2009},
}

@Article{GalichonHsieh_hedonic,
  author  = {A. Galichon and Hsieh, Y.-W.},
  title   = {A Hedonic Model with Rationing by Waiting},
  journal = {Working paper},
  year    = {2020},
}

@Article{RothRothblumVate1993,
  author  = {Roth, A. and Rothblum, U. and Vande Vate, J.},
  title   = {Stable Matchings, Optimal Assignments, And Linear Programming},
  journal = {Mathematics of Operations Research},
  year    = {1993},
  volume  = {18},
  pages   = {803-828},
}

@Book{Rheinboldt1974,
  title     = {Methods For Solving Systems Of Nonlinear Equations},
  publisher = {SIAM},
  year      = {1974},
  author    = {Rheinboldt, W.},
}

@Book{Sandel2013,
  title     = {What Money Can't Buy: The Moral Limits of Markets},
  publisher = {Farrar, Straus and Giroux},
  year      = {2013},
  author    = {Sandel, M.},
}

@Article{Smith2006,
  author  = {Smith, L.},
  title   = {The Marriage Model With Search Frictions},
  journal = {Journal of Political Economy},
  year    = {2006},
  volume  = {114},
  pages   = {1124-1144},
}

@Article{KelsoCrawford1982,
  author        = {Kelso, Alexander S and Crawford, Vincent P},
  title         = {Job matching, coalition formation, and gross substitutes},
  journal       = {Econometrica},
  year          = {1982},
  pages         = {1483--1504},
  __markedentry = {[Baiyu:]},
}

@Article{Iversen1993,
  author  = {Iversen, T.},
  title   = {A Theory of Hospital Waiting List},
  journal = {Journal of Health Economics},
  year    = {1993},
  volume  = {12},
  pages   = {55-71},
}

@Article{LindsayFeigenbaum1984,
  author  = {Lindsay, C. M. and B. Feigenbaum},
  title   = {Rationing by Waiting Lists},
  journal = {American Economic Review},
  year    = {1984},
  volume  = {74},
  number  = {3},
  pages   = {404-417},
}

@InBook{IversenSiciliani2011,
  chapter   = {Non-Price Rationing and Waiting Times},
  pages     = {649-670},
  title     = {Oxford Handbook of Health Economics},
  publisher = {Oxford University Press},
  year      = {2011},
  author    = {Iversen, T. and L. Siciliani},
}

@Article{MartinSmith1999,
  author  = {Martin, S. and P. C. Smith},
  title   = {Rationing by Waiting Lists: An Empirical Investigation},
  journal = {Journal of Public Economics},
  year    = {1999},
  volume  = {71},
  pages   = {141-164},
}

@InBook{Maddala1986,
  chapter   = {Disequilibrium, Self-Section, and Switching Models},
  pages     = {1632-1688},
  title     = {Handbook of Econometrics, vol III},
  publisher = {Elsevier},
  year      = {1986},
  author    = {Maddala, G. S.},
}

@Article{Braverman_etal_2016,
  author  = {Braverman, M. and J. Chen and S. Kannan},
  title   = {Optimal Provision-After-Wait in Healthcare},
  journal = {Mathematics of Operations Research},
  year    = {2016},
  volume  = {41},
  pages   = {352-376},
}

@Article{Benassy1976,
  author  = {B\'{e}nassy, J.-P.},
  title   = {The Disequilibrium Approach to Monopolistic Price Setting and General Monopolistic
  Equilibrium,},
  journal = {Review of Economic Studies},
  year    = {1976},
  volume  = {43},
  pages   = {69-81},
}

@Article{BurdettColes1997,
  author  = {Burdett, K. and Coles, M.},
  title   = {Marriage and Class},
  journal = {Quarterly Journal of Economics},
  year    = {1997},
  volume  = {112},
  pages   = {252-168},
}

@Article{Dreze1987,
  author  = {J. Dr\'{e}ze},
  title   = {Underemployment Equilibria: From Theory to Econometrics and Policy},
  journal = {European Economic Review},
  year    = {1987},
  volume  = {31},
  pages   = {9-34},
}

@Article{FairJaffee1972,
  author  = {Fair, R. C. and D. M. Jaffee},
  title   = {Methods of Estimation for Markets in Disequilibrium},
  journal = {Econometrica},
  year    = {1972},
  volume  = {40},
  pages   = {497-514},
}

@Article{GlaeserLuttmer2003,
  author  = {Glaeser, E. and Luttmer, E.},
  title   = {The Misallocation Of Housing Under Rent Control},
  journal = {American Economic Review},
  year    = {2003},
  volume  = {93},
  number  = {4},
  pages   = {1027-1046},
}

@Article{GourierouxLaffontMonfort1980,
  author  = {Gourieroux, C. and J. J. Laffont and A. Monfort},
  title   = {Disequilibrium Econometrics in Simultaneous Equations Systems},
  journal = {Econometrica},
  year    = {1980},
  volume  = {48},
  pages   = {75-96},
}

@Article{GourierouxLaroque1985,
  author  = {Gourieroux, C. and Laroque, G},
  title   = {The Aggregation of Commodities in Quantity Rationing Models},
  journal = {International Economic Review},
  year    = {1985},
  volume  = {26},
  pages   = {681-699},
}

@Article{HartlineRoughgarden2008,
  author  = {Hartline, J. and T. Roughgarden},
  title   = {Mechanism Design and Money Burning},
  journal = {STOC},
  year    = {2008},
}

@Book{HassinHaviv2003,
  title     = {To Queue or Not to Queue: Equilibrium Behavior in Queuing Systems},
  publisher = {Kluwer Academic Publishers},
  year      = {2003},
  author    = {Hassin, R. and Haviv, M.},
}

@Book{Galichon2026,
  title     = {Discrete Choice Models: Mathematical Methods, Econometrics, and Data Science},
  publisher = {Princeton University Press},
  year      = {2026},
  author    = {Alfred Galichon},
}

@Article{Condorelli2012,
  author  = {Daniele Condorelli},
  title   = {What Money Can’t Buy: Efficient Mechanism Design With Costly Signals},
  journal = {Games and Economic Behavior},
  year    = {2012},
  volume  = {75},
  pages   = {613-624},
}

@Article{Sealy1979,
  author  = {Sealy, C. W. Jr.},
  title   = {Credit Rationing in the Commercial Loan Market: Estimates of a Structural Model Under
  Conditions of Disequilibrium},
  journal = {Journal of Finance},
  year    = {1979},
  volume  = {34},
  pages   = {689-702},
}

@Article{AzevedoLeshno2016,
  author  = {Azevedo, E. and Leshno, J.},
  title   = {A Supply and Demand Framework for Two-Sided Matching Markets},
  journal = {Journal of Political Economy},
  year    = {2016},
  volume  = {124},
  pages   = {1235-1268},
}

@misc{GalichonJacquet2024,
  author    = {Alfred Galichon and Antoine Jacquet},
  title     = {Substitutability, equilibrium transport, and matching models},
  editor    = {Young-Heon Kim and Soumik Pal and Brendan Pass},
  year      = {2024},
  note      = {Lecture notes prepared for the 2022 PIMS-IFDS-NSF summer school on optimal transport, University of Washington, Seattle (forthcoming in a volume edited by Y.-H.\ Kim, S.\ Pal, and B.\ Pass)}
}

@article{GalichonHsiehSylvestre2024,
  title = {Monotone Comparative Statics for Submodular Functions, with an Application to Deferred Acceptance},
  journal = {Working paper},
  author = {Alfred Galichon and Yu-Wei Hsieh and Maxime Sylvestre},
  year = {2024},
}

\appendix

\section{M-Functions}
\label{apx:Mfunctions}

Several results in our analysis rely on a theorem by \citeasnoun{Rheinboldt1974} regarding M-functions.%
\footnote{Our presentation differs slightly from that of \citeasnoun{Rheinboldt1974}; notably, strongly nonreversing functions are instead called P-functions, and his theorem is more general that the statement of Theorem \ref{thm:Mfunction_inverse_isotone}.}
%
(See also \citeasnoun{GalichonJacquet2024} for an extended presentation.)
Let $P$ be a subset of $\R^n$, and $f : P \to \R^n$ which maps $p = (p_i) \in P$ to $f(p) = (f_i(p)) \in \R^n$.

\begin{definition}  \label{def:Mfunction}
The function $f$ is an M-function if it is both: \begin{enumerate}[label=(\roman*), topsep= 0pt, itemsep= 0pt]
\item off-diagonally antitone:
for any $i \neq i'$, $f_i$ is weakly decreasing in $p_{i'}$,
\item strongly nonreversing: for any $p, p' \in P$, $p \leq p'$ and $f(p) \geq f(p')$ together imply $p = p'$.
\end{enumerate}
\end{definition}

\begin{theorem}[\citename{Rheinboldt1974} \citeyear*{Rheinboldt1974}, Theorem 9.1]
\label{thm:Mfunction_inverse_isotone}
If $f$ is an M-function, then it is \emph{inverse isotone}: for any $p, p' \in P$,
\[
f(p) \leq f(p') \implies p \leq p'.
\]
In particular, $f$ is injective.
\end{theorem}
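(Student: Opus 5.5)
The plan is the standard lattice argument. Given $p,p'\in P$ with $f(p)\leq f(p')$, I will compare $p$ with the componentwise minimum
\[
q=p\wedge p', \qquad q_i=\min\{p_i,p'_i\}.
\]
By construction $q\leq p$. I will show that $f(q)\geq f(p)$. Strong nonreversing then forces $q=p$, that is $p\leq p'$, which is the claim.

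Throughout I assume $P$ is a product of intervals (as in all applications in the paper, where $P=\R^{|\X\times\Y|}$). This makes $P$ closed under $\wedge$, and it keeps every point obtained by changing coordinates one at a time inside $P$, so off-diagonal antitonicity can be applied along such coordinatewise paths.

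Let $J=\{i: p_i>p'_i\}$; the aim is to show $J=\emptyset$. I will check $f_i(q)\geq f_i(p)$ coordinate by coordinate, in two cases.

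\emph{Case $i\notin J$.} Here $q_i=p_i$, and $q$ is obtained from $p$ by weakly lowering only coordinates $i'\neq i$. Since $f_i$ is weakly decreasing in each $p_{i'}$ with $i'\neq i$, moving one coordinate at a time gives $f_i(q)\geq f_i(p)$.

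\emph{Case $i\in J$.} Here $q_i=p'_i$, and $q$ is obtained from $p'$ by weakly lowering only coordinates $i'\notin J$, since $q_{i'}=p_{i'}\leq p'_{i'}$ there. The same monotonicity argument gives $f_i(q)\geq f_i(p')$. Combined with the hypothesis $f_i(p')\geq f_i(p)$, this yields $f_i(q)\geq f_i(p)$.

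Hence $q\leq p$ and $f(q)\geq f(p)$, and strong nonreversing (Definition \ref{def:Mfunction}(ii)) gives $q=p$. Therefore $p_i=\min\{p_i,p'_i\}\leq p'_i$ for all $i$, which is inverse isotonicity. Injectivity follows by applying the result in both directions: $f(p)=f(p')$ implies both $p\leq p'$ and $p'\leq p$.

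The only delicate step is the domain issue. Both membership $q\in P$ and the validity of the one-coordinate-at-a-time monotonicity comparisons depend on the shape of $P$. I will handle this by stating the rectangular (or at least lattice and coordinatewise-convex) assumption explicitly, noting that it holds in every use of the theorem in this paper. The monotonicity inequalities themselves are routine.
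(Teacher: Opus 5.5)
Your proof is correct under the product-of-intervals hypothesis. The paper gives no argument for this statement: it cites Rheinboldt (1974, Theorem 9.1) and remarks only that Rheinboldt's version is more general. So the comparison is between a citation and your self-contained proof.

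Your meet construction is the standard route to inverse isotonicity for M-functions. You compare $p$ with $q=p\wedge p'$. You get $f_i(q)\ge f_i(p)$ directly when $p_i\le p'_i$, and via $f_i(q)\ge f_i(p')\ge f_i(p)$ when $p_i>p'_i$. Strong nonreversing applied to $q\le p$ then finishes. The argument needs no continuity and gives injectivity at once.

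The citation buys generality and brevity. Your version buys transparency and, more importantly, an explicit domain hypothesis that the paper's statement lacks. That hypothesis is not cosmetic. With $P$ an arbitrary subset of $\R^n$, as the paper allows, the statement is false: on $P=\{(1,0),(0,1)\}$ both conditions of Definition \ref{def:Mfunction} hold vacuously, so even a constant $f$ qualifies. Your assumption does cover every use in the paper: $\R^{|\X\times\Y|}$ for $\mathbf e$ and $\mathbf q$, and $\R_+^{|\X\times\Y|}$ for the demand maps in Lemma \ref{lem:demand_inverse_isotone}.

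Commit to the rectangular assumption, though. The fallback ``lattice and coordinatewise-convex'' that you mention is not enough, because the one-coordinate-at-a-time paths can leave such a set. For example, $P=\{(0,0,t):t\in[0,1]\}\cup\{(1,1,t):t\in[0,1]\}$ is a sublattice of $\R^3$ whose intersection with every axis-parallel line is convex. The map $f(0,0,t)=(0,0,t)$, $f(1,1,t)=(0,0,t+1)$ is an M-function on it, yet $f(1,1,0)=f(0,0,1)$. Your step from $p=(1,1,0)$ down to $q=(0,0,0)$ is exactly where it breaks: the intermediate point $(0,1,0)$ is not in $P$.
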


We now apply Theorem \ref{thm:Mfunction_inverse_isotone} to the demand functions $\boldsymbol \mu^\alpha$ and $\boldsymbol \mu^\gamma$.

\begin{lemma}  \label{lem:demand_inverse_isotone}
Under Assumption \ref{ass:nonvanishing}, the functions $-\boldsymbol \mu^\alpha$ and $-\boldsymbol \mu^\gamma$ are inverse isotone.
\end{lemma}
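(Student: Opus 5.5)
We show that $-\boldsymbol\mu^\alpha$, viewed as a function of $\tau^\alpha\in\R^{|\X\times\Y|}$, is an M-function in the sense of Definition \ref{def:Mfunction}. Theorem \ref{thm:Mfunction_inverse_isotone} then gives inverse isotonicity. The argument for $-\boldsymbol\mu^\gamma$ is identical, with the roles of $x$ and $y$ exchanged.

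Demand for different passenger types involves disjoint coordinates. Since $\boldsymbol\mu^\alpha_{xy}$ depends only on $(\tau^\alpha_{xy'})_{y'\in\Y}$, we work separately for each $x$.

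\textbf{Off-diagonal antitonicity.} For $(x,y)\neq(x',y')$ we need $-\boldsymbol\mu^\alpha_{xy}$ to be weakly decreasing in $\tau^\alpha_{x'y'}$.
\begin{itemize}
\item If $x'\neq x$, there is no dependence at all.
\item If $x'=x$ and $y'\neq y$, raising $\tau^\alpha_{xy'}$ lowers the utility of alternative $y'$ only. The event that $y$ is a maximizer therefore weakly enlarges, so $\boldsymbol\mu^\alpha_{xy}$ weakly increases.
\end{itemize}
Ties have probability zero under Assumption \ref{ass:nonvanishing}, so the argmax is almost surely unique and the probabilities are well defined.

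\textbf{Strong nonreversingness.} Fix $x$ and suppose $\tau\le\tau'$ with $\boldsymbol\mu^\alpha_{xy}(\tau)\le\boldsymbol\mu^\alpha_{xy}(\tau')$ for all $y$. We must show $\tau=\tau'$.

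Let $S=\{y:\tau'_{xy}>\tau_{xy}\}$ and suppose, for contradiction, that $S\neq\emptyset$. Compare the probability of choosing some alternative in $S$ under $\tau$ and under $\tau'$:
\begin{itemize}
\item Summing the hypothesis over $y\in S$ gives $\P_x(\text{choice}\in S\mid\tau)\le\P_x(\text{choice}\in S\mid\tau')$.
\item On the other hand, passing from $\tau$ to $\tau'$ strictly lowers the utilities of every alternative in $S$ and leaves unchanged those outside $S$, including the outside option $0$. The event of choosing in $S$ therefore weakly shrinks.
\end{itemize}
The key claim is that this shrinkage is \emph{strict}: its difference has positive probability. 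Take any $y\in S$ and consider the set of $\varepsilon$ such that
\begin{itemize}
\item under $\tau$, alternative $y$ beats all others;
\item under $\tau'$, the outside option $0$ beats all of $S\cup(\Y\setminus S)$.
\end{itemize}
This is a nonempty open set of $\varepsilon$. Concretely, take $\varepsilon_0$ just above $\alpha_{xy}-\tau_{xy}+\varepsilon_y-\delta$, with $\delta<\tau'_{xy}-\tau_{xy}$ small, and all other components very negative. Because $\P_x$ has a nowhere vanishing density, this open set has positive probability.

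Hence $\P_x(\text{choice}\in S\mid\tau')<\P_x(\text{choice}\in S\mid\tau)$, contradicting the summed inequality. So $S=\emptyset$, i.e. $\tau=\tau'$.

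\textbf{Main obstacle.} The delicate step is the strictness argument in strong nonreversingness. It requires summing over the set $S$ rather than working coordinatewise, and exhibiting an open region of shocks where a choice in $S$ is displaced to an alternative outside $S$. The outside option with zero utility shift is always available as that alternative. Once this is established, Theorem \ref{thm:Mfunction_inverse_isotone} concludes that $-\boldsymbol\mu^\alpha$ and $-\boldsymbol\mu^\gamma$ are inverse isotone.
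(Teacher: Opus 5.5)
Your proof is correct and follows essentially the same route as the paper. Both show that $-\boldsymbol\mu^\alpha$ is an M-function, and both get strong nonreversingness from a summation argument whose strictness comes from the outside option and the nowhere-vanishing density. The only difference is that the paper sums over all pairs $xy$ and uses that the outside-option share $\boldsymbol\mu^\alpha_{x0}$ is strictly increasing in each $\tau_{xy}$; your open-set construction is exactly what justifies that strictness, so restricting the sum to $S$ is convenient but not necessary.
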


\begin{proof}[Proof of Lemma \ref{lem:demand_inverse_isotone}]
We show that $-\boldsymbol \mu^\alpha$ is an M-function; the result then follows from Theorem \ref{thm:Mfunction_inverse_isotone}.
(The proof for $-\boldsymbol \mu^\gamma$ is similar.)
It is clear from expression \eqref{eq:demand_passengers} that $\boldsymbol \mu_{xy}^\alpha$ is weakly increasing in $\tau_{x'y'}^\alpha$ for $x'y' \neq xy$, hence $-\boldsymbol \mu^\alpha$ is off-diagonally antitone.
Next, let $\tau, \tau' \geq 0$ such that $\tau \leq \tau'$ and $-\boldsymbol \mu^\alpha (\tau) \geq -\boldsymbol \mu^\alpha (\tau')$ (we drop the superscript $\alpha$ from $\tau^\alpha$ for convenience).
Looking for a contradiction, assume $\tau \neq \tau'$.
By summation of $\boldsymbol \mu^\alpha (\tau) \leq \boldsymbol \mu^\alpha (\tau')$ we obtain
\[
\sum_x \big( n_x - \boldsymbol \mu_{x0}^\alpha (\tau) \big)
= \sum_{xy} \boldsymbol \mu_{xy}^\alpha (\tau)
\leq \sum_{xy} \boldsymbol \mu_{xy}^\alpha (\tau')
= \sum_x \big( n_x - \boldsymbol \mu_{x0}^\alpha (\tau') \big).
\]
Furthermore, under Assumption \ref{ass:nonvanishing}, for all $x$ and $y$ the function $\boldsymbol \mu_{x0}^\alpha$ is strictly increasing in $\tau_{xy}$, and constant in $\tau_{x'y}$ for all $x' \neq x$.
As a result, $\sum_x \big( n_x - \boldsymbol \mu_{x0}^\alpha (\tau) \big)$ is strictly decreasing in $\tau_{xy}$ for all $xy$.
Since $\tau \leq \tau'$ and $\tau \neq \tau'$ this implies
\[
\sum_x \big( n_x - \boldsymbol \mu_{x0}^\alpha (\tau) \big)
> \sum_x \big( n_x - \boldsymbol \mu_{x0}^\alpha (\tau') \big),
\]
a contradiction.
\end{proof}

\medskip

\section{Proofs}
\label{apx:proofs}

\subsection{Proof of Theorem \ref{thm:EquivWithClassicalNotion}} ~

\noindent \underline{Part (i)}.
Assume that $\pi$ is a stable matching in the classical sense (Definition \ref{def:classical_stable_matching}) and define $\mu$ as in \eqref{eq:mu_from_pi} and $u$ and $v$ as in \eqref{eq:u_v_from_pi}.
Clearly, one has that $\mu_{xy}$ is an integer, $\sum_{y \in \Y} \mu_{xy} \leq n_x$, and $\sum_{x \in \X} \mu_{xy} \leq m_y$.
Next, since $\alpha_{ij} = \alpha_{xy}$ and $\gamma_{ij} = \gamma_{xy}$, we have
\begin{equation*}
\min_{\substack{i: x_i= x \\ j: y_j= y}}
\max \big\{
u_i^\pi - \alpha_{ij},
v_j^\pi - \gamma_{ij}
\big\}
= \max \big\{
u_x - \alpha_{xy},
v_y - \gamma_{xy}
\big\}
\end{equation*}
as the two terms are separable in $i$ and $j$, so the minimum is attained by choosing the worst-off passenger of type $x$ and the worst-off taxi of type $y$.
Since
$\max \big\{
u_i^\pi - \alpha_{ij},
v_j^\pi - \gamma_{ij}
\big\} \geq 0$
for all $ij$ by Definition \ref{def:classical_stable_matching}(iv), we thus have
\begin{equation*}
\max \big\{
u_x - \alpha_{xy},
v_y - \gamma_{xy}
\big\} \geq 0.
\end{equation*}
Now assume $\mu_{xy} > 0$.
Then there are $i$ and $j$ such that $x_i=x$, $y_j=y$, and $\pi_{ij} > 0$,
so $u_i^\pi = \alpha_{ij}$ and $v_j^\pi = \gamma_{ij}$, hence
$\max \big\{
u_i^\pi - \alpha_{ij},
v_j^\pi - \gamma_{ij}
\big\} = 0$.
Thus from the equality above,
\begin{equation*}
\max \big\{
u_x - \alpha_{xy},
v_y - \gamma_{xy}
\big\} = 0.
\end{equation*}
Finally, $u_i^\pi \geq 0$ for all $i$ implies $u_x \geq 0$ by construction \eqref{eq:u_v_from_pi}, and assuming $\mu_{x0} > 0$, there must be $i$ such that $x_i = x$ and $\pi_{ij} = 0$ for all $j$, hence $u_i^\pi = 0$ from \eqref{eq:u_i_v_j} and thus $u_x = 0$ from \eqref{eq:u_v_from_pi}.
An analogous argument applies to $v_y$.

\medskip

\noindent \underline{Part (ii)}.
Assume that $(\mu ,u,v)$ is an aggregate stable matching with money burning (Definition \ref{def:stable_matching_deterministic}), and consider a matrix $(\pi_{ij}) \in \{0,1\}^{\I \times \J}$ such that $\sum_{j \in \J} \pi_{ij} \leq 1$, $\sum_{i \in \I} \pi_{ij} \leq 1$, and \eqref{eq:mu_from_pi} holds.
Looking for a contradiction, assume there is a blocking pair $ij$, so that
\begin{equation*}
\max \big\{
u_i^\pi - \alpha_{xy},
v_j^\pi - \gamma_{xy}
\big\} < 0
\end{equation*}
where $x$ and $y$ are the respective types of $i$ and $j$.
Suppose $i$ is unmatched, i.e., $\pi_{ij} = 0$ for all $j$.
Then $u_i^\pi = 0$ from \eqref{eq:u_i_v_j}, and we also have $\mu_{x0} > 0$, which implies $u_x = 0$ by Definition \ref{def:stable_matching_deterministic}(v); thus $u_x \leq u_i^\pi$.
Now suppose $i$ is matched, i.e., $\pi_{ij'} = 1$ for some $j'$.
Then $u_i^\pi = \alpha_{xy'}$ from \eqref{eq:u_i_v_j}, and denoting $y'$ the type of $j'$, we also have $\mu_{xy'} > 0$ hence
$\max \big\{
u_x - \alpha_{xy'},
v_{y'} - \gamma_{xy'}
\big\} = 0$ by Definition \ref{def:stable_matching_deterministic}(iv);
thus $u_x \leq \alpha_{xy'} = u_i^\pi$ again.
A similar argument shows that $v_y \leq v_j^\pi$ whether $j$ is matched or not, hence
\begin{equation*}
\max \big\{
u_x - \alpha_{xy},
v_y - \gamma_{xy}
\big\}
\leq \max \big\{
u_i^\pi - \alpha_{xy},
v_j^\pi - \gamma_{xy}
\big\} < 0,
\end{equation*}
which contradicts Definition \ref{def:stable_matching_deterministic}(iv).
Finally, assume towards a contradiction that there is a blocking agent $i$, so that $u_i^\pi < 0$.
We again have that $u_x \leq u_i^\pi$, hence $u_x < 0$, which contradicts Definition \ref{def:stable_matching_deterministic}(v).
By a symmetric argument, there is no blocking agent $j$ either.

\medskip

\subsection{Proof of Theorem \ref{thm:ExistenceUniquenessEquil}} ~


To prove uniqueness of a solution to the system of equation \eqref{eq:zero_excess_demand}, we show that the excess demand function $\mathbf e$ defined in \eqref{eq:excess_demand} is an M-function, hence inverse isotone and injective (see Appendix \ref{apx:Mfunctions}).

\begin{proof}[Uniqueness]
For all $x'y' \neq xy$ the functions $\boldsymbol \mu_{xy}^\alpha$ and $\boldsymbol \mu_{xy}^\gamma$ are weakly increasing in $\tau_{x'y'}$, hence the function $\mathbf e_{xy} (\tau) = \boldsymbol \mu_{xy}^\gamma (\tau^-) - \boldsymbol \mu_{xy}^\alpha (\tau^+)$ is weakly decreasing in $\tau_{x'y'}$.
The function $\mathbf e$ is therefore off-diagonally antitone.

Next, we show that $\mathbf e$ is strongly nonreversing.
Remark that under Assumption \ref{ass:nonvanishing}, the demand $\boldsymbol \mu_{x0}^\alpha (\tau^\alpha)$ for the outside option is strictly increasing in each $\tau_{xy}^\alpha$; and similarly $\boldsymbol \mu_{0y}^\gamma (\tau^\gamma)$ is strictly increasing in each $\tau_{xy}^\gamma$.
Therefore, since the aggregate excess demand $\sum_{xy} \mathbf e_{xy} (\tau) = \sum_{xy} \big( \boldsymbol \mu_{xy}^\gamma (\tau^-) - \boldsymbol \mu_{xy}^\alpha (\tau^+) \big)$ is equal to
\begin{equation*}
    \sum_{xy} \mathbf e_{xy} (\tau)
    = \sum_y m_y - \sum_x n_x
    + \sum_x \boldsymbol \mu_{x0}^\alpha (\tau^+)
    - \sum_y \boldsymbol \mu_{0y}^\gamma (\tau^-),
\end{equation*}
it is strictly increasing in $\tau_{xy}$ for all $xy$.
Now let $\tau \leq \tau'$ such that $\mathbf e (\tau) \geq \mathbf e (\tau')$, and assume towards a contradiction that $\tau \neq \tau'$.
Then $\mathbf e (\tau) \geq \mathbf e (\tau')$ implies $\sum_{xy} \mathbf e_{xy} (\tau) \geq \sum_{xy} \mathbf e_{xy} (\tau')$, and because $\tau \leq \tau'$ and $\tau \neq \tau'$, the strict monotonicity of the aggregate excess demand implies $\sum_{xy} \mathbf e_{xy} (\tau) < \sum_{xy} \mathbf e_{xy} (\tau')$, a contradiction.
In fact $\tau = \tau'$, hence $\mathbf e$ is strongly nonreversing, and thus it is an M-function.
It follows from Theorem \ref{thm:Mfunction_inverse_isotone} that $\mathbf e$ inverse isotone, hence injective.
\end{proof}

The existence proof is constructive and builds upon the inverse isotonicity of $\mathbf e$.

\begin{proof}[Existence]
We construct a sequence $(\tau^s)$ which converges to a solution of equation \eqref{eq:zero_excess_demand}.
By Assumption \ref{ass:nonvanishing}, for $c > 0$ large enough the vector $\overline \tau$ such that $\overline \tau_{xy} = c$ for all $xy$ verifies $\mathbf e (\overline \tau) \geq 0$,
and similarly, the vector $\underline \tau$ such that $\underline \tau_{xy} = -c$ for all $xy$ verifies $\mathbf e (\underline \tau) \leq 0$.
Since $\mathbf e (\underline \tau) \leq \mathbf e (\overline \tau)$, by inverse isotonicity of $\mathbf e$ we have $\underline \tau \leq \overline \tau$.
We define $\tau^0 = \underline \tau$, so that $\tau^0 \leq \overline \tau$ and $\mathbf e (\tau^0) \leq 0$, and we show that we can recursively construct $\tau^{s+1}$ such that $\tau^s \leq \tau^{s+1} \leq \overline \tau$ and $\mathbf e_{xy} (\tau^{s+1}) \leq 0$ for all $xy$.

Suppose we have constructed such a sequence up to $\tau^s$.
Then $\mathbf e_{xy} (\tau_{xy}^s, \tau_{-xy}^s) \leq 0$, and the function $\tau_{xy} \mapsto \mathbf e_{xy} (\tau_{xy}, \tau_{-xy}^s)$ is increasing and continuous.
Furthermore, since $\mathbf e$ is off-diagonally antitone and $\tau^s \leq \overline \tau$, we have $\mathbf e_{xy} (\overline \tau_{xy}, \tau_{-xy}^s) \geq \mathbf e_{xy} (\overline \tau_{xy}, \overline \tau_{-xy}) \geq 0$.
Hence by the intermediate value theorem, there exists $\tau_{xy}^{s+1}$ such that $\tau_{xy}^s \leq \tau_{xy}^{s+1} \leq \overline \tau_{xy}$ and $\mathbf e_{xy} (\tau_{xy}^{s+1}, \tau_{-xy}^s) = 0$.
Again using that $\mathbf e$ is off-diagonally antitone, we have $\mathbf e_{xy} (\tau_{xy}^{s+1}, \tau_{-xy}^{s+1}) \leq \mathbf e_{xy} (\tau_{xy}^{s+1}, \tau_{-xy}^s) = 0$ for all $xy$, i.e., $\mathbf e_{xy} (\tau^{s+1}) \leq 0$.
This proves the induction hypothesis.

Since for any $xy$, the sequence $(\tau_{xy}^s)$ is monotone nondecreasing and bounded above by $\overline \tau_{xy}$, it has a limit $\tau_{xy}^*$.
Thus, taking $s \to +\infty$ in $\mathbf e_{xy} (\tau_{xy}^{s+1}, \tau_{-xy}^s) = 0$, the continuity of $\mathbf e$ implies that $\mathbf e (\tau^*) = 0$.
Therefore $\tau^*$ solves the system \eqref{eq:zero_excess_demand}.
\end{proof}

\medskip

\subsection*{Proof of Theorem \ref{thm:limit}} ~

Let $\sigma_k = 1/k$.
For given $x$, we have $\mu_{x0} (\sigma_k) \leq n_x$ hence $-\sigma_k \ln (\mu_{x0} (\sigma_k) ) \geq -\sigma_k \ln n_x$.
The sequence $-\sigma_k \ln (\mu_{x0} (\sigma_k) )$ thus takes its values in $[-\ln n_x; +\infty)$, so up to an extraction it admits a limit $u_x^* \in [-\ln n_x; +\infty) \cup \{+\infty\}$.
Taking $k \to +\infty$ in the inequality above, we get $u_x^* \geq 0$, so in fact $u_x^* \in \R_+ \cup \{+\infty\}$.
Similarly, for any $y$ the sequence $-\sigma_k \ln \mu_{0y} (\sigma_k)$ admits (up to an extraction) a limit $v_y^* \in \R_+ \cup \{+\infty\}$.
As a result,
\begin{align*}
\max \big\{ u_x^* - \alpha_{xy}, v_y^* - \gamma_{xy} \big\}
&= \lim_{k \to +\infty} \max \big\{ -\sigma_k \ln \mu_{x0} (\sigma_k) - \alpha_{xy}, -\sigma_k \ln \mu_{0y} (\sigma_k) - \gamma_{xy} \big\} \\
&= \lim_{k \to +\infty} -\sigma_k \ln (\mu_{xy} (\sigma_k)),
\end{align*}
and since $\mu_{xy} (\sigma_k) \leq \min \{n_x, m_y\}$, by the same reasoning that limit must also be nonnegative, hence $\max \big\{ u_x - \alpha_{xy}, v_y - \gamma_{xy} \big\} \geq 0$.

Next, since the sequences $\mu_{x0} (\sigma_k)$, $\mu_{0y} (\sigma_k)$, and $\mu_{xy} (\sigma_k)$ are bounded (below by 0, above by $n_x$ or $m_y$), up to further extractions we may also define their respective limits $\mu_{x0}^*$, $\mu_{0y}^*$, and $\mu_{xy}^*$.
By continuity, these limits verify the feasibility constraints $\sum_y \mu_{xy}^* \leq n_x$ and $\sum_x \mu_{xy}^* \leq m_y$.
Now assume $\mu_{x0}^* > 0$.
Then $-\sigma_k \ln \mu_{x0} (\sigma_k) \sim -\sigma_k \ln \mu_{x0}^* \to 0$ as $k \to +\infty$, thus $u_x^* = 0$.
Similarly, $\mu_{0y}^* > 0$ implies $v_y^* = 0$.
In addition, $\mu_{xy}^* > 0$ implies $\max \big\{ u_x^* - \alpha_{xy}, v_y^* - \gamma_{xy} \big\} = 0$ from the expression above.
This also shows that $u_x^*$ and $v_y^*$ are in fact finite:
for $u_x^*$ for instance, since $\mu_{x0}^* + \sum_y \mu_{xy}^* = n_x$ we have either $\mu_{x0}^* > 0$, in which case $u_x^* = 0$, or $\mu_{xy}^* > 0$ for some $y$, in which case $u_x^* \leq \alpha_{xy}$.

Thus $(\mu^*,u^*,v^*)$ satisfies conditions (ii) to (vi) of Definition \ref{def:stable_matching_deterministic}, but not necessarily condition (i), as the integrality of $\mu_{xy}^*$ is not guaranteed.
But by the Integral flow theorem, $\mu_{xy}^*$ is a convex combination of some $K$ integral matrices $\mu^k$ that still satisfy conditions (ii) and (iii):
$\mu^* = \sum_{k=1}^K w_k \, \mu^k$, where $w_k \geq 0$ and $\sum_{k=1}^K w_k = 1$.
Thus, since $\mu_{xy}^1 > 0$ implies $\mu_{xy}^* > 0$, $\mu_{x0}^1 > 0$ implies $\mu_{x0}^* > 0$, and $\mu_{0y}^1 > 0$ implies $\mu_{0y}^* > 0$, the outcome $(\mu^1, u^*, v^*)$ satisfies conditions (i)--(vi) of Definition \ref{def:stable_matching_deterministic}.

\medskip

\subsection*{Proof of Theorem \ref{thm:constrained_choice_unique}} ~

This proof is similar to that of Theorem \ref{thm:ExistenceUniquenessEquil}.
To prove uniqueness, we show that the function $\mathbf q : \R^{|\X \times \Y|} \to \R^{|\X \times \Y|}$ defined by
\[
\mathbf q_{xy} (\theta) = \bar \mu_{xy} - \theta_{xy}^- - \boldsymbol \mu_{xy} (\theta^+)
\]
is an M-function, hence inverse isotone and injective (see Appendix \ref{apx:Mfunctions}).
The existence proof is constructive.

\begin{proof}[Uniqueness]
The function $\theta \mapsto \boldsymbol \mu_{xy} (\theta^+)$ is weakly increasing in $\theta_{x'y'}$ for $x'y' \neq xy$, hence the function $\mathbf q$ is off-diagonally antitone.
Now let us show that $\mathbf q$ is also strongly nonreversing.
Let $\theta, \theta'$ be such that $\theta \leq \theta'$ and $\mathbf q (\theta) \geq \mathbf q (\theta')$.
Then $\theta^- \geq \theta'^-$ and $\bar \mu - \theta^- - \boldsymbol \mu (\theta^+) \geq \bar \mu - \theta'^- - \boldsymbol \mu (\theta'^+)$.
By summation $\bar \mu - \boldsymbol \mu (\theta^+) \geq \bar \mu - \boldsymbol \mu (\theta'^+)$, hence $-\boldsymbol \mu (\theta^+) \geq -\boldsymbol \mu (\theta'^+)$.
Since $-\boldsymbol \mu$ is an M-function (Lemma \ref{lem:demand_inverse_isotone}) it is inverse isotone, therefore $\theta^+ \geq \theta'^+$.
However $\theta^+ \leq \theta'^+$ by assumption, hence $\theta^+ = \theta'^+$ and the inequality $\mathbf q (\theta) \geq \mathbf q (\theta')$ simplifies into $\theta^- \leq \theta'^-$.
Hence $\theta^- = \theta'^-$ as well, and therefore $\theta = \theta'$ so $\mathbf q$ is strongly nonreversing.
Applying Theorem \ref{thm:Mfunction_inverse_isotone} allows to conclude.
\end{proof}

\begin{proof}[Existence]
Recall that $\mathbf q (\theta) = \bar \mu - \theta^- - \boldsymbol \mu (\theta^+)$.
On the one hand, for $c>0$ large enough, the vector $\overline \theta$ such that $\overline \theta_{xy} = c$ for all $xy$ verifies $\overline \theta^- = 0$ and $\boldsymbol \mu_{xy} (\overline \theta^+) < \bar \mu_{xy}$ for all $xy$, hence $\mathbf q (\overline \theta) \geq 0$.
On the other hand, with $\underline \theta = -\bar \mu$ we have $\underline \theta^- = \bar \mu$ and $\boldsymbol \mu_{xy} (\underline \theta^+) > 0$ for all $xy$, hence $\mathbf q (\underline \theta) \leq 0$.
The rest of the proof consists of building a sequence $(\theta^s)$ with $\theta^0 = \underline \theta$ and such that $\theta^s \leq \theta^{s+1} \leq \overline \theta$ for all $s$, identical to the existence proof for Theorem \ref{thm:ExistenceUniquenessEquil}.
\end{proof}

\medskip

\subsection*{Proof of Theorem \ref{thm:DA_convergence}} ~

First, note that the updates described in Algorithm \ref{alg:darum} are well defined, in the sense that the capacities $\mu^{A,t}$, $\mu^{P,t}$, and $\mu^{K,t}$ are always strictly positive.
Indeed, by initialization, $\mu_{xy}^{A,0}=\min\{n_x,m_y\}>0$, hence $\mu^{P,1}=\boldsymbol c^\alpha(\mu^{A,0})$ and $\mu^{K,1}=\boldsymbol c^\gamma(\mu^{P,1})$ are also strictly positive.
Moreover, since in the proposal phase one has
$\mu^{A,0} - \mu^{P,1} = (\theta^{\alpha,1})^- \geq 0$,
it follows from the update rule that
$\mu^{A,1}
=\mu^{A,0} - (\mu^{P,1} - \mu^{K,1})
=(\theta^{\alpha,1})^- + \mu^{K,1} > 0$.
Iterating this argument shows that $\mu^{A,t}$, $\mu^{P,t}$, and $\mu^{K,t}$ remain strictly positive for all $t$.

Next, we prove a series of claims about Algorithm \ref{alg:darum} leading to the result.
Assumption \ref{ass:nonvanishing} is held throughout.

\begin{claim}  \label{claim1}
Kept offers remain in place at the next period: $\mu^{K,t} \leq \mu^{P,t+1}$.
\end{claim}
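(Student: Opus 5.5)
We need to show $\mu^{K,t} \leq \mu^{P,t+1} = \boldsymbol c^\alpha(\mu^{A,t})$, where $\mu^{A,t} = \mu^{A,t-1} - \mu^{P,t} + \mu^{K,t}$. The approach is to compare the passenger-side constrained choice problem at round $t+1$, with capacities $\mu^{A,t}$, to the one at round $t$, with capacities $\mu^{A,t-1}$.

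Let $\theta$ solve \eqref{eq:theta_equation} for $\boldsymbol\mu^\alpha$ with $\bar\mu=\mu^{A,t-1}$, and $\theta'$ solve it with $\bar\mu = \mu^{A,t}$. The capacities only decrease, since $\mu^{K,t}\le\mu^{P,t}$ (taxis' constrained demand never exceeds the capacity $\mu^{P,t}$). So $\mu^{A,t}\le\mu^{A,t-1}$. The function $\mathbf q$ from the proof of Theorem \ref{thm:constrained_choice_unique} is inverse isotone, and it solves $\mathbf q(\theta)=0$ with $\bar\mu$ entering additively. Equivalently, $\boldsymbol\mu(\theta^+)+\theta^-$ is inverse isotone in $\theta$ as a function of the right-hand side. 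Hence $\theta'\ge\theta$, which means waiting times weakly rise and slacks weakly fall.

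The key step is a segment-by-segment case split on $xy$.

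\emph{Case 1: $\theta'_{xy}\geq 0$.} The round-$(t+1)$ capacity binds, so $\mu^{P,t+1}_{xy}=\mu^{A,t}_{xy}$. Since $\mu^{A,t}_{xy} = (\mu^{A,t-1}_{xy}-\mu^{P,t}_{xy}) + \mu^{K,t}_{xy}$ and the first term equals $(\theta_{xy})^-\ge 0$, we get $\mu^{P,t+1}_{xy}\ge \mu^{K,t}_{xy}$.

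\emph{Case 2: $\theta'_{xy}<0$.} Then $\theta_{xy}\le\theta'_{xy}<0$, so $\theta_{xy}^+=\theta'^+_{xy}=0$. I need $\boldsymbol\mu^\alpha_{xy}(\theta'^+)\ge\boldsymbol\mu^\alpha_{xy}(\theta^+)$. This holds by gross substitutes: $\theta'^+\ge\theta^+$ componentwise, with equality in coordinate $xy$, and $\boldsymbol\mu^\alpha_{xy}$ is weakly increasing in the other coordinates' waiting times. Thus $\mu^{P,t+1}_{xy}\ge\mu^{P,t}_{xy}\ge\mu^{K,t}_{xy}$.

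The main obstacle is the monotone comparative statics $\theta'\ge\theta$. I will derive it from the inverse isotonicity of $\mathbf q$ (Theorem \ref{thm:Mfunction_inverse_isotone}), applied as follows. Write $\mathbf q_{\bar\mu}(\theta)=\bar\mu-\theta^--\boldsymbol\mu(\theta^+)$. Then $\mathbf q_{\mu^{A,t-1}}(\theta')=\mu^{A,t-1}-\mu^{A,t}\ge 0=\mathbf q_{\mu^{A,t-1}}(\theta)$, and inverse isotonicity of $\mathbf q_{\mu^{A,t-1}}$ gives $\theta\le\theta'$. The same reasoning yields the needed fact $\mu^{K,t}\le\mu^{P,t}$ directly from $\boldsymbol c^\gamma(\bar\mu)=\bar\mu-\theta^-\le\bar\mu$.
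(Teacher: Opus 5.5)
Your proof is correct and is essentially the paper's argument. The key step is the same as in the paper: you evaluate $\mathbf q_{\mu^{A,t-1}}$ at the round-$(t+1)$ solution and invoke inverse isotonicity to get $\theta \le \theta'$. From there the paper finishes for all segments at once: $\theta \le \theta'$ gives $\theta'^- \le \theta^-$, hence $\mu^{P,t+1} = \mu^{A,t} - \theta'^- \ge \mu^{A,t} - \theta^- = \mu^{K,t}$. Your case split, and the gross-substitutes argument in Case 2, are therefore valid but unnecessary.
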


\begin{proof}
By Theorem \ref{thm:constrained_choice_unique}, the equation $\boldsymbol \mu^\alpha (\theta^+) + \theta^- = \bar \mu$ has a unique solution for any $\bar \mu$.
Let $\theta^t$ be that solution for $\bar \mu = \mu^{A,t-1}$, so that $\mu^{P,t} = \mathbf c^\alpha (\mu^{A, t-1}) = \boldsymbol \mu^\alpha ((\theta^t)^+)$.
Recall from the proof of Theorem \ref{thm:constrained_choice_unique} that for any $\bar \mu$, the function $\mathbf q (\theta) = \bar \mu - \theta^- - \boldsymbol \mu^\alpha (\theta^+)$ is an M-function, hence inverse isotone.
Taking $\bar \mu = \mu^{A, t-1}$, since $\mu^{A,t} \leq \mu^{A, t-1}$ we obtain $\mathbf q (\theta^t) = 0 \leq \mu^{A, t-1} - \mu^{A,t} = \mathbf q (\theta^{t+1})$, hence by inverse isotonicity $\theta^t \leq \theta^{t+1}$ and therefore $(\theta^t)^- \geq (\theta^{t+1})^-$.
As a result, 
\[
\mu^{A,t-1} - \mu^{P,t}
= \mu^{A,t-1} - \boldsymbol \mu^\alpha ((\theta^t)^+)
= (\theta^t)^- \geq (\theta^{t+1})^-
= \mu^{A,t} - \boldsymbol \mu^\alpha ((\theta^{t+1})^+)
= \mu^{A,t} - \mu^{P,t+1},
\]
hence $\mu^{K,t} = \mu^{A,t} - \mu^{A,t-1} + \mu^{P,t} \leq \mu^{P,t+1}$.
\end{proof}

Next, let $\tau^{\alpha,t}$ and $\tau^{\gamma,t}$ be the waiting times associated with the constrained demands $\mathbf c^\alpha (\mu^{A,t-1})$ and $\mathbf c^\gamma (\mu^{P,t})$, respectively.
We have:

\begin{claim}  \label{claim2}
$\tau^{\alpha,t}$ weakly increases with $t$, and $\tau^{\gamma,t}$ weakly decreases with $t$.
\end{claim}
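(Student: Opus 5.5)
Both monotonicity statements follow from inverse isotonicity of the constrained-choice maps, as in the proof of Claim \ref{claim1}. For fixed capacities $\bar\mu$, the map $\mathbf q(\theta)=\bar\mu-\theta^--\boldsymbol\mu(\theta^+)$ is an M-function (proof of Theorem \ref{thm:constrained_choice_unique}). Shifting $\bar\mu$ only translates $\mathbf q$ by a constant, so comparing solutions for two capacity matrices reduces to comparing values of one fixed M-function. Concretely, let $\theta(\bar\mu)$ denote the solution of \eqref{eq:theta_equation}. If $\bar\mu'\le\bar\mu$, then with $\mathbf q$ built from $\bar\mu$,
\[
\mathbf q(\theta(\bar\mu))=0\le \bar\mu-\bar\mu'=\mathbf q(\theta(\bar\mu')).
\]
Theorem \ref{thm:Mfunction_inverse_isotone} then gives $\theta(\bar\mu)\le\theta(\bar\mu')$, hence $\theta(\bar\mu)^+\le\theta(\bar\mu')^+$. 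In words: lower capacities yield weakly higher waiting times. This monotonicity lemma holds on either side of the market.

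For the passenger side, $\tau^{\alpha,t}=(\theta^{\alpha,t})^+$ where $\theta^{\alpha,t}$ solves the problem with capacity $\mu^{A,t-1}$. The update rule gives $\mu^{A,t}=\mu^{A,t-1}-(\mu^{P,t}-\mu^{K,t})$. Since $\mu^{K,t}=\boldsymbol c^\gamma(\mu^{P,t})\le\mu^{P,t}$ by the capacity inequality \eqref{eq:cap_ineq} on the taxi side, we get $\mu^{A,t}\le\mu^{A,t-1}$. The lemma then yields $\theta^{\alpha,t}\le\theta^{\alpha,t+1}$, so $\tau^{\alpha,t}\le\tau^{\alpha,t+1}$. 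This step is exactly the argument already carried out in Claim \ref{claim1}.

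For the taxi side, $\tau^{\gamma,t}=(\theta^{\gamma,t})^+$ where $\theta^{\gamma,t}$ solves $\boldsymbol\mu^\gamma(\theta^+)+\theta^-=\mu^{P,t}$. It therefore suffices to show that the taxi-side capacities grow, $\mu^{P,t}\le\mu^{P,t+1}$; the lemma applied with the roles of $\bar\mu$ and $\bar\mu'$ reversed then gives $\theta^{\gamma,t+1}\le\theta^{\gamma,t}$, and taking positive parts gives the claim.

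The main obstacle is showing that proposals increase, $\mu^{P,t}\le\mu^{P,t+1}$. Claim \ref{claim1} only gives $\mu^{K,t}\le\mu^{P,t+1}$, and on segments where no offers were rejected ($\mu^{K,t}_{xy}=\mu^{P,t}_{xy}$) this already yields $\mu^{P,t}_{xy}\le\mu^{P,t+1}_{xy}$. On segments with rejections, proposals may in fact fall, so a segment-by-segment comparison of proposals cannot work. I would instead prove the taxi-side monotonicity directly from Claim \ref{claim1}. I would show that $\theta^{\gamma,t}$ also solves the taxi problem with capacity $\mu^{K,t}$ in place of $\mu^{P,t}$, up to the slack term. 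Indeed, $\mu^{K,t}=\boldsymbol\mu^\gamma((\theta^{\gamma,t})^+)$, so replacing $\mu^{P,t}$ by $\mu^{K,t}$ keeps the same $\theta^+$ and sets $\theta^-=0$. Segments with rejections have $\theta^{\gamma,t}_{xy}>0$, hence $\theta^-=0$ there already. Segments with $\theta^{\gamma,t}_{xy}<0$ have positive slack; there $\mu^{K,t}_{xy}=\mu^{P,t}_{xy}-(\theta^{\gamma,t}_{xy})^-$, yet Claim \ref{claim1} still covers them.

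The cleanest route is therefore to compare $\mathbf q$-values rather than capacities. Fix $\mathbf q^\gamma$ with capacity $\mu^{P,t+1}$ and evaluate it at $\theta^{\gamma,t}$ to get $\mathbf q^\gamma(\theta^{\gamma,t})=\mu^{P,t+1}-\mu^{P,t}$. On segments with $\theta^{\gamma,t}_{xy}>0$, this value equals $\mu^{P,t+1}_{xy}-\mu^{K,t}_{xy}$, which is nonnegative by Claim \ref{claim1}. The segments with $\theta^{\gamma,t}_{xy}\le0$ need the most care. On them there are no rejections, so $\mu^{A,t}_{xy}=\mu^{A,t-1}_{xy}$. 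I would then use the passenger-side monotonicity to argue that $\tau^{\alpha}_{xy}$ rises only where capacity binds, so proposals there do not fall and $\mathbf q^\gamma_{xy}(\theta^{\gamma,t})\ge0$. If that step holds, inverse isotonicity gives $\theta^{\gamma,t+1}\le\theta^{\gamma,t}$, which completes the claim.
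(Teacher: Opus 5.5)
Your passenger-side argument is correct and coincides with the paper's; it is the comparison already made in Claim \ref{claim1}. The taxi side has a genuine gap. The route you finally commit to cannot work, and the step you leave conditional is false. Evaluating $\mathbf q^\gamma$ (capacity $\mu^{P,t+1}$) at $\theta^{\gamma,t}$ gives $\mu^{P,t+1}-\mu^{P,t}$. That is exactly the proposal comparison you had already conceded can fail. Moreover, the conclusion this route aims at, $\theta^{\gamma,t+1}\le\theta^{\gamma,t}$, is not true in general.

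You have also reversed which segments carry rejections. The taxi-side slack $(\theta^{\gamma,t})^-=\mu^{P,t}-\mu^{K,t}$ \emph{is} the rejected volume. So rejections occur precisely where $\theta^{\gamma,t}_{xy}<0$, and there $\mu^{A,t}_{xy}<\mu^{A,t-1}_{xy}$.

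For a concrete failure, take one type on each side, with $n_x=10$, $m_y=1$, $\alpha_{xy}=\gamma_{xy}=0$ and logit shocks. Then $\mu^{P,1}=1$, $\theta^{\gamma,1}=-1/2$, $\mu^{K,1}=\mu^{A,1}=1/2$, $\mu^{P,2}=1/2$ and $\theta^{\gamma,2}=0$. Proposals fall, $\mathbf q^\gamma(\theta^{\gamma,1})=-1/2<0$, and $\theta^{\gamma,2}>\theta^{\gamma,1}$. Only the positive parts are monotone, so no finer case analysis on the segments with $\theta^{\gamma,t}_{xy}\le0$ can rescue the argument.

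The missing idea is one you actually wrote down and then abandoned. Because $\boldsymbol\mu^\gamma(\tau^{\gamma,t})+0=\mu^{K,t}$ with $\tau^{\gamma,t}=(\theta^{\gamma,t})^+\ge0$, the point $\tau^{\gamma,t}$ is exactly (not ``up to the slack term'') the unique solution of the taxi problem with capacity $\mu^{K,t}>0$. Apply your own monotonicity lemma with $\bar\mu'=\mu^{K,t}\le\mu^{P,t+1}=\bar\mu$, where the inequality is Claim \ref{claim1}. Equivalently, evaluate $\mathbf q^\gamma$ at $\tau^{\gamma,t}$ instead of $\theta^{\gamma,t}$. This gives $\mathbf q^\gamma(\tau^{\gamma,t})=\mu^{P,t+1}-\mu^{K,t}\ge0=\mathbf q^\gamma(\theta^{\gamma,t+1})$ in every segment, with no case split. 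Inverse isotonicity then yields $\theta^{\gamma,t+1}\le\tau^{\gamma,t}$, and taking positive parts gives $\tau^{\gamma,t+1}\le\tau^{\gamma,t}$.

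This is precisely the paper's proof. Passing to the positive part discards the rejected offers, which is why Claim \ref{claim1} suffices and the false monotonicity of proposals is not needed.
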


\begin{proof}
We have $\tau^{\alpha,t} = (\theta^t)^+$ where $\theta^t$ is the solution to $\boldsymbol \mu^\alpha (\theta^+) + \theta^- = \mu^{A,t-1}$.
We saw in the proof of Claim \ref{claim1} that $\theta^t \leq \theta^{t+1}$, hence $\tau^{\alpha,t} \leq \tau^{\alpha,t+1}$.
For the other side of the market, we have $\tau^{\gamma,t} = (\theta^t)^+$ where $\theta^t$ is now the solution to $\boldsymbol \mu^\gamma (\theta^+) + \theta^- = \mu^{P,t}$.
The kept offers are thus $\mu^{K,t} = \mathbf c^\gamma (\mu^{P,t}) = \boldsymbol \mu^\gamma (\tau^{\gamma,t})$.
Again using that $\mathbf q (\theta) = \mu^{P,t+1} - \theta^- - \boldsymbol \mu^\gamma (\theta^+)$ is inverse isotone, and 
\[
\mathbf q (\theta^{t+1})
= 0
\leq \mu^{P,t+1} - \mu^{K,t}
= \mu^{P,t+1} - 0 - \boldsymbol \mu^\gamma (\tau^{\gamma,t})
= \mathbf q (\tau^{\gamma,t})
\]
where the inequality comes from Claim \ref{claim1}, we obtain that $\theta^{t+1} \leq \tau^{\gamma,t}$, hence $\tau^{\gamma,t+1} \leq \tau^{\gamma,t}$.
\end{proof}

\begin{claim} \label{claim3}
At every step $t$, $\min \big\{ \tau_{xy}^{\alpha,t}, \tau_{xy}^{\gamma,t} \big\} = 0$ for every $x \in \X$ and $y \in \Y$.
\end{claim}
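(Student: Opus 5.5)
We need to show that at each step $t$, $\tau^{\alpha,t}_{xy}>0$ implies $\tau^{\gamma,t}_{xy}=0$ (or the converse). The plan is to use the complementary slackness built into each constrained choice problem, together with Claim \ref{claim1}, to show that whenever passengers wait in segment $xy$, taxis' capacity there is slack.

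\textbf{Step 1 (passenger side).} Suppose $\tau^{\alpha,t}_{xy}>0$. By construction $\tau^{\alpha,t}=(\theta^{\alpha,t})^+$, where $\theta^{\alpha,t}$ solves $\boldsymbol\mu^\alpha(\theta^+)+\theta^-=\mu^{A,t-1}$. Hence $(\theta^{\alpha,t}_{xy})^-=0$, so the capacity binds and $\mu^{P,t}_{xy}=\mu^{A,t-1}_{xy}$. In words, passengers $x$ exhaust all offers still available to taxis $y$.

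\textbf{Step 2 (availability accounting).} I would show that $\mu^{A,t-1}_{xy}$ strictly exceeds what taxis can keep. The natural identity is $\mu^{A,t-1}=\mu^{A,t-2}-(\mu^{P,t-1}-\mu^{K,t-1})$. Since $\mu^{A,t-2}-\mu^{P,t-1}=(\theta^{\alpha,t-1})^-\ge0$, this gives $\mu^{A,t-1}\ge\mu^{K,t-1}$, but I need a bound against $\mu^{K,t}$ instead. A cleaner route is to argue directly on the taxi side. Suppose toward a contradiction that $\tau^{\gamma,t}_{xy}>0$. Then $\theta^{\gamma,t}_{xy}>0$, so $\mu^{K,t}_{xy}=\mu^{P,t}_{xy}$, meaning taxis keep every offer in segment $xy$ and there is no rejection there: $\mu^{A,t}_{xy}=\mu^{A,t-1}_{xy}$. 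Moreover, $\tau^{\gamma,t}_{xy}>0$ means taxis' unconstrained demand at zero waiting time exceeds the proposals they received.

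\textbf{Step 3 (contradiction via monotonicity).} This is where I expect the main obstacle. The two positive waits are not contradictory for a single segment in isolation, because $\tau^\gamma$ depends on the whole matrix $\mu^{P,t}$. I would exploit Claim \ref{claim2} and an induction on $t$. At $t=1$, $\mu^{A,0}_{xy}=\min\{n_x,m_y\}$. If passengers bind there, then $\mu^{P,1}_{xy}=\min\{n_x,m_y\}$. Under Assumption \ref{ass:nonvanishing}, taxi demand $\boldsymbol\mu^\gamma_{xy}(\tau)<m_y$ always, and it is also strictly below any proposal volume of $\min\{n_x,m_y\}$ when that equals $m_y$. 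So the taxi constraint is slack and $\tau^{\gamma,1}_{xy}=0$.

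The case $n_x<m_y$ needs an extra argument. Here I would instead observe that $\mu^{P,1}_{xy}=n_x$ forces $\boldsymbol\mu^\alpha_{xy}=n_x$, which is impossible since $\boldsymbol\mu^\alpha_{xy}<n_x$. So for $t=1$ the passenger constraint never binds and $\tau^{\alpha,1}=0$.

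For the inductive step, I would use that on the set where $\tau^{\alpha,t}_{xy}>0$, the availability $\mu^{A,t-1}_{xy}$ has been reduced by past rejections. A rejection at some $s<t$ requires $\mu^{K,s}_{xy}<\mu^{P,s}_{xy}$, that is, a slack taxi constraint with $\tau^{\gamma,s}_{xy}=0$. By Claim \ref{claim2}, $\tau^{\gamma,t}_{xy}\le\tau^{\gamma,s}_{xy}=0$.

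So the key lemma to establish is: if no rejection ever occurred in segment $xy$ before $t$, then $\mu^{A,t-1}_{xy}=\min\{n_x,m_y\}$. In that case passengers cannot bind, by the strict inequality $\boldsymbol\mu^\alpha_{xy}<n_x$ together with $\mu^{P,t}_{xy}\le\boldsymbol\mu^\gamma$-feasible bounds below $m_y$, as in the $t=1$ case. Combining the two cases, every segment has $\tau^{\alpha,t}_{xy}=0$ or $\tau^{\gamma,t}_{xy}=0$.
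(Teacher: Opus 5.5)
Your argument is correct and is essentially the paper's. By Claim~\ref{claim2}, $\tau^{\gamma,t}_{xy}>0$ rules out any rejection in segment $xy$ before step $t$, since a rejection at step $s$ makes the taxi constraint slack and forces $\tau^{\gamma,s}_{xy}=0$. Hence $\mu^{A,t-1}_{xy}=\min\{n_x,m_y\}$, and a binding passenger constraint would then give $\mu^{P,t}_{xy}=\mu^{K,t}_{xy}=\min\{n_x,m_y\}$, contradicting $\boldsymbol\mu^\alpha_{xy}<n_x$ and $\boldsymbol\mu^\gamma_{xy}<m_y$.

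The induction on $t$ is unnecessary. Note that ``passengers cannot bind'' in your no-rejection case holds only under your standing hypothesis $\tau^{\gamma,t}_{xy}>0$, which gives $\mu^{P,t}_{xy}=\mu^{K,t}_{xy}<m_y$. Without it, passengers can bind at $m_y$ when $m_y<n_x$, and you must conclude taxi slackness instead, exactly as in your $t=1$ case.
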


\begin{proof}
Looking for a contradiction, assume that $\tau_{xy}^{\alpha,t} > 0$ and $\tau_{xy}^{\gamma,t} > 0$ for some $xy$ and $t$.
From Claim \ref{claim2}, $\tau_{xy}^{\gamma,t} > 0$ implies $\tau_{xy}^{\gamma, s}>0$ for all $s \in \{1,\dots,t\}$.
A positive waiting time implies a saturated constraint, so $\mu_{xy}^{K,s} = \mu_{xy}^{P,s}$ for all such $s$, which means that no offer of type $xy$ was refused up to step $t$: $\mu_{xy}^{A,s} = \mu_{xy}^{A,0} = \min \{n_x, m_y\}$.
Next, $\tau_{xy}^{\alpha,t} > 0$ implies that $\mu_{xy}^{P,t} = \mu_{xy}^{A,t-1}$, and as a result $\mu_{xy}^{K,t} = \mu_{xy}^{P,t} = \min \{n_x, m_y\}$.
This implies that either all passengers $x$ or all taxis $y$ are demanding the same option; but under Assumption \ref{ass:nonvanishing} this cannot be the case, since the distributions $\P_x$ and $\Q_y$ of the random utility shocks have nonvanishing densities.
\end{proof}

We are now ready to prove the theorem.

\begin{proof}[Proof of Theorem \ref{thm:DA_convergence}]
The sequence $\tau^{\gamma,t}$ is weakly decreasing (Claim \ref{claim2}) and bounded below by 0, so it converges; denote $\tau^{\gamma, *}$ its limit.
By continuity of $\boldsymbol \mu^\gamma$, we thus have
$\mu^{K,t} = \boldsymbol \mu^\gamma (\tau^{\gamma, t}) \to \boldsymbol \mu^\gamma (\tau^{\gamma, *})$.

Next, the sequence $\tau^{\alpha,t}$ is weakly increasing.
Looking for a contradiction, suppose that $\tau_{xy}^{\alpha,t}$ is not bounded above for some $xy$, so that $\tau_{xy}^{\alpha,t} \to \infty$.
Then $\boldsymbol \mu_{xy}^\alpha (\tau^{\alpha,t}) \to 0$, and since $\boldsymbol \mu^\alpha (\tau^{\alpha,t}) = \mu^{P,t} \geq \mu^{K,t-1}$ from Claim \ref{claim1}, taking the limit we obtain that $\boldsymbol \mu_{xy}^\gamma (\tau^{\gamma, *}) = 0$.
But this contradicts Assumption \ref{ass:nonvanishing}, which ensures a strictly positive demand for any value of $\tau^\gamma$.
In fact $\tau^{\alpha,t}$ is bounded above, so it converges; denote $\tau^{\alpha, *}$ its limit.
By continuity of $\boldsymbol \mu^\alpha$, we thus have
$\mu^{P,t} = \boldsymbol \mu^\alpha (\tau^{\alpha, t}) \to \boldsymbol \mu^\alpha (\tau^{\alpha, *})$.

Furthermore, the sequence $\mu^{A,t}$ is weakly decreasing and bounded below so it also converges, hence $\mu^{P,t} - \mu^{K,t} = \mu^{A,t-1} - \mu^{A,t} \to 0$ and therefore $\mu^{P,t}$ and $\mu^{K,t}$ have the same limit, $\mu^* := \boldsymbol \mu^\alpha (\tau^{\alpha, *}) = \boldsymbol \mu^\gamma (\tau^{\gamma, *})$, so $(\mu^*, \tau^{\alpha, *}, \tau^{\gamma, *})$ satisfies the market clearing condition.
Taking $t \to \infty$ in Claim \ref{claim3} we have $\min \big\{ \tau_{xy}^{\alpha, *}, \tau_{xy}^{\gamma, *} \big\} = 0$ for all $xy$, so $(\mu^*, \tau^{\alpha, *}, \tau^{\gamma, *})$ also satisfies the one-sided money burning condition.
Hence $(\mu^*, \tau^{\alpha, *}, \tau^{\gamma, *})$ is an aggregate stable matching with money burning in the random utility case.
Finally, note that since $\mu^* = \boldsymbol \mu^\alpha (\tau^{\alpha, *}) = \boldsymbol \mu^\gamma (\tau^{\gamma, *})$, $\tau^{\alpha, *}$ is indeed the solution to the constrained demand problem $\boldsymbol \mu^\alpha (\theta^+) + \theta^- = \mu^*$; and $\tau^{\gamma, *}$ the solution to the constrained demand problem $\boldsymbol \mu^\gamma (\theta^+) + \theta^- = \mu^*$.
\end{proof}

\end{document}